\documentclass[a4paper]{article}
\usepackage[utf8]{inputenc}
\usepackage{amsthm,amsmath,amssymb,amsfonts}
\usepackage{tikz}
\usepackage{palatino}

\usepackage{hyperref}
\hypersetup{
    colorlinks=true,
    linkcolor=black,
    urlcolor=blue,
    citecolor=black,
    pdftitle={The Constructive mu-calculus: Game Semantics and Non-Wellfounded Proof Systems},
    pdfauthor={Leonardo Pacheco}
}

\theoremstyle{plain}
\newtheorem{theorem}{Theorem}
\newtheorem{lemma}[theorem]{Lemma}
\newtheorem{proposition}[theorem]{Proposition}

\theoremstyle{definition}
\newtheorem{definition}{Definition}
\newtheorem{remark}{Remark}

\usepackage{amsmath,amssymb,amsfonts}
\usepackage{tikz}
\usepackage{multicol}
\usepackage{bussproofs}
\usepackage{xcolor}

\newcommand{\ck}{\mathsf{CK}}
\newcommand{\muck}{\mathsf{lab}\mu\mathsf{CK}_\omega}
\newcommand{\muik}{\mathsf{lab}\mu\mathsf{IK}_\omega}
\newcommand{\mugk}{\mathsf{lab}\mu\mathsf{GK}_\omega}

\newcommand{\tuple}[1]{{\langle #1 \rangle}}
\newcommand{\verifier}{\mathsf{V}}
\newcommand{\refuter}{\mathsf{R}}
\newcommand{\role}{\mathsf{Q}}
\newcommand{\dualrole}{\bar{\mathsf{Q}}}

\newcommand{\sig}[1]{{\mathrm{sig}^\mathsf{I}\langle #1 \rangle}}
\newcommand{\sub}[1]{{\mathrm{Sub}(#1)}}

\newcommand{\R}{\mathbf{R}}

\definecolor{myblue}{HTML}{C6E0FF}
\newcommand{\principal}[1]{\colorbox{myblue}{$#1$}}
\newcommand{\minor}[1]{\colorbox{myblue}{$#1$}}

\title{The Constructive \texorpdfstring{$\mu$}{mu}-calculus:\\ Game Semantics and Non-Wellfounded Proof Systems}
\author{Leonardo Pacheco \\
{\small Institute of Sciece Tokyo, Japan} \\
{\small \texttt{leonardovpacheco@gmail.com}}}
\date{}

\begin{document}
\maketitle

\begin{abstract}
    We study a variant of the modal $\mu$-calculus based on the constructive modal logic $\mathsf{CK}$.
    We define game semantics for the constructive $\mu$-calculus and prove its equivalence to the birelational Kripke semantics.
    We then use the game semantics to prove the soundness and completeness of a fully-labeled non-wellfounded proof system for it.
    At last, we briefly describe how to adapt the game semantics and proof system to the $\mu$-calculus over other non-classical modal logics.
\end{abstract}

\section{Introduction}
We define a constructive variant of the modal $\mu$-calculus by adding least and greatest fixed-point operators to the constructive modal logic $\ck$.
We then define birelational and Kripke game semantics for the constructive $\mu$-calculus and prove their equivalence.
We combine fully-labeled sequents and non-wellfounded proof systems to define $\muck$.
We prove that $\muck$ is sound and weakly complete with respect to $\ck$-models using our game semantics.
To illustrate the flexibility of our methods, we describe how to adapt $\muck$ to proof systems $\muik$ and $\mugk$ sound and weakly complete with respect to the non-classical modal logics $\mathsf{IK}$ and $\mathsf{GL}$, respectively.

The modal $\mu$-calculus was defined by Kozen~\cite{kozen1983results} by adding least and greatest fixed-point operators to $\mathsf{K}$.
The $\mu$-formulas are famously hard to understand as they allow for entangled least and greatest fixed-point operators.
Game semantics for the $\mu$-calculus they give a more intuitive interpretation of the $\mu$-formulas; they are also a powerful tool to work with.
See~\cite{bradfield2018mucalculus,gradel2003automata,lenzi2010recent} for more information on the classical $\mu$-calculus.

Kozen also defined a Hilbert-style proof system $\mathsf{\mu K}$ for the classical $\mu$-calculus, whose completeness was first proved by Walukiewicz~\cite{walukiewicz1995completeness}.
This result was obtained by an analysis of a non-wellfounded proof system first studied by Niwiński and Walukiewicz~\cite{niwinski1996gamesformucalc}.
Other infinitary proof systems for the $\mu$-calculus have also been studied~\cite{afshari2024demystifying,studer2008mucalculus}.

In constructive modal logic, the duality of the modalities $\Box$ and $\Diamond$ is lost.
These logics have been studied for a long time: some of the first texts on the topic are Fitch \cite{fitch1948intuitionistic} and Prawitz \cite{prawitz1965natural}.
In this paper, we use Mendler and de Paiva's birelational $\ck$-models \cite{mendler2005constructive}.
These models have two relations: one to interpret the intuitionistic part of the language and one to interpret the modal part of the language.
These models are based on those of Wijesekera \cite{wijesekera1990constructive}, but allow worlds where the false proposition $\bot$ holds.

In addition to $\mathsf{CK}$, there are other non-classical variants of modal logic such as intuitionistic modal logic~\cite{simpson1994proof,fischerservi1978finite} and Gödel--Dummett modal logic~\cite{caicedo2010godelmodallogic,caicedo2013finite}.
These are closely related to constructive modal logic
On the axiomatic side, these logics are obtained by adding axioms to constructive modal logic.
On the semantics side, they are obtained by adding further restrictions on $\ck$-models.
It is also common to consider non-classical modal logics containing only one of $\Box$ and $\Diamond$ following~\cite{bozic1984modallogic, ono1977intuitionistic}.
See~\cite{das2023diamonds, degroot2024semantical, simpson1994proof} for more information on the relation between constructive and intuitionistic modal logics.
Note that other non-classical fixed-point logics other than have also appeared in the literature~\cite{afshari2023illfddintlintemplogic,afshari2024mastermodality,balbiani2019intlintemplogic,boudou2022complete,fernandezduque2018intuitionistictemporallogic,jager2016intcommonknowledge,nishimura1982semantical}.
More recently, Afshari \emph{et al.}~\cite{afshari2024demystifying} proved the completeness of a constructive fragment of the classical $\mu$-calculus.

Fully-labeled sequents were first used by Marin \emph{et al.}~\cite{marin2021fullylabeled} to study the intuitionistic modal logic $\mathsf{IK}$.
They consist of labeled formulas $x:\varphi$ along statements $x\preceq y$ and $x R y$ between variables representing worlds.
They were later used to prove the decidability of $\mathsf{IS4}$~\cite{girlando2023intuitionistic} and $\mathsf{IK}$~\cite{girlando2024ik}.
As we will see, they are a flexible tool which can be easily adapted to many non-classical modal logics.

\subparagraph*{Our contribution}
We first define evaluation games for the constructive $\mu$-calculus.
In both classical and constructive games, two players discuss whether a formula is true at a given world of a Kripke model.
In the classical games, it is usual to refer to the players as Verifier and Refuter.
In the constructive games, we will still have two players; but now Verifier and Refuter denote two roles and may switch roles depending on their moves.
This change is necessary because, over constructive semantics, not every formula can be put in negative normal form; and formulas in negative normal form are what allows us to simplify the classical evaluation games.
Furthermore, we also need to consider the semantics of constructive diamonds, which mix universal and existential quantifiers.
Therefore we will need a more delicate argument to prove the equivalence of the semantics in the constructive case.

We then define a fully-labeled non-wellfounded proof system $\muck$.
This is obtained by adding fixed-points to a proof system based on the fully-labeled proof systems of Marin \emph{et al.}~\cite{marin2021fullylabeled} and binested proof system of Gao and Olivetti~\cite{gao2026binested}.
We prove its soundness by showing that any contermodel for a formula $\varphi$ witnesses that $\varphi$ has no proof.
We then prove its completeness by extracting a countermodel $\varphi$ from a failed proof search.

We also sketch how to adapt our results to obtain proof systems for the intuitionistic modal logic $\mathsf{IK}$~\cite{simpson1994proof} the Gödel--Dummett modal logic $\mathsf{GK}$~\cite{caicedo2010godelmodallogic}, and other related modal logics. 
We close the paper with some remarks on future work.

\section{Constructive \texorpdfstring{$\mu$}{mu}-calculus}
\label{sec::preliminaries}
\subsection{Syntax}
The language of the $\mu$-calculus is obtained by adding \emph{least} and \emph{greatest fixed-point operators} $\mu$ and $\nu$ to the language of modal logic.
When defining the fixed-point formulas $\mu X.\varphi$ and $\nu X.\varphi$, we need a syntactical requirement in order to have well-defined semantics.

\begin{definition}
    Fix pairwise disjoint sets $\mathrm{Prop}$ and $\mathrm{Var}$ of propositions and propositional variables, respectively.
    The \emph{$\mu$-formulas} are defined by the following grammar:
    \[
        \varphi := P \mid X \mid \bot \mid \varphi\land\varphi  \mid \varphi\lor\varphi \mid \varphi\to\varphi \mid \Box\varphi \mid \Diamond\varphi \mid  \mu X.\varphi \mid \nu X.\varphi,
    \]
    where $P$ is a proposition, $X$ is a propositional variable occurring only positively in $\varphi$.
    We abbreviate $\varphi \to \bot$ by $\neg\varphi$ and $\bot\to\bot$ by $\top$.
    We also call $\mu$-formulas by formulas.

    The fixed-point formulas $\mu X.\varphi$ and $\nu X.\varphi$ are defined iff $X$ is positive in $\varphi$.
    We classify an occurrence of $X$ as \emph{positive} or \emph{negative} in a given formula by structural induction:
    \begin{itemize}
        \item $X$ is positive and negative in $P$ and in $\bot$;
        \item $X$ is positive in $X$;
        \item if $Y\neq X$, $X$ is positive and negative in $Y$;
        \item if $X$ is positive (negative) in $\varphi$ and $\psi$, then $X$ is positive (negative) in $\varphi\land \psi$, $\varphi\lor \psi$, $\Box\varphi$, and $\Diamond\varphi$;
        \item if $X$ is negative (positive) in $\varphi$ and positive (negative) in $\psi$, then $X$ is positive (negative) in $\varphi\to \psi$;
        \item $X$ is positive and negative in $\mu X.\varphi$ and $\nu X.\varphi$.
    \end{itemize}
\end{definition}

Denote the set of \emph{subformulas} of $\varphi$ by $\sub{\varphi}$.
We use $\eta$ to denote either $\mu$ or $\nu$.
An occurrence of a variable $X$ in a formula $\varphi$ is \emph{bound} iff it in the scope of a fixed-point operator $\eta X$.
An occurrence of $X$ is \emph{free} iff it is not bound.
We treat free occurrences of variable symbols as propositional symbols.
A formula $\varphi$ is a \emph{sentence} iff it has no free variables.
An occurrence of $X$ in $\varphi$ is \emph{guarded} iff it is in the scope of some modality $\Box$ or $\Diamond$.
A formula $\varphi$ is \emph{guarded} iff, for all $\eta X.\psi\in\sub{\varphi}$, $X$ is guarded in $\psi$.
A formula $\varphi$ is \emph{well-bounded} iff, for all variables $X$ occurring bounded in $\varphi$, $X$ occurs only once and there is only one fixed-point operator $\eta X$ in $\varphi$.
A formula is \emph{well-named} iff it is guarded and well-bounded.
Note that every formula is equivalent to a well-named formula.
If $\varphi$ is a well-named formula and $\eta X.\psi\in\sub{\psi}$, denote by $\psi_X$ the formula $\psi$ which is bound by the fixed-point operator $\eta X$.

\subsection{Kripke Semantics}
We consider the birelational semantics first defined by Mendler and de Paiva \cite{mendler2005constructive}:
\begin{definition}
    A $\ck$-model\footnote{The $\ck$-models are sound and complete semantics for the constructive modal logic $\ck$ of Mendler and de Paiva \cite{mendler2005constructive}.}
    is a tuple $M=\tuple{W, W^\bot ,\preceq, R, V}$ where:
    \begin{itemize}
        \item $W$ is a non-empty set of \emph{possible worlds}; 
        \item $W^\bot\subseteq W$ is the set of \emph{fallible worlds}; 
        \item $\preceq$ is a reflexive and transitive binary relation over $W$; 
        \item $R$ is a binary relation over $W$; and 
        \item $V:\mathrm{Prop}\to \mathcal{P}(W)$ is a \emph{valuation function}.
    \end{itemize}
    We call $\preceq$ the \emph{intuitionistic relation} and $R$ the \emph{modal relation}.
    We require that, if $w \preceq v$ and $w\in V(P)$, then $v\in V(P)$; and that $W^\bot\subseteq V(P)$, for all $P\in\mathrm{Prop}$.
    We also require that if $W^\bot$ is closed under $\preceq$ and $R$.
    For convenience, we sometimes write $w\succeq v$ for $v\preceq w$.
\end{definition}

When proving the correctness of game semantics, we will need consider models with augmented valuations.
That is, we will treat free variable symbols as a proposition symbols and extend the valuation to them.
Formally, let $M = \tuple{W, W^\bot,\preceq, R,V}$ be a $\ck$-model, $A\subseteq W$ and $X$ be a variable symbol; the \emph{augmented $\ck$-model} $M[X\mapsto A]$ is obtained by setting $V(X) := A$.
We can similarly further augment an augmented model.
Given any $\mu$-formula $\varphi$, we also define $\|\varphi(A)\|^M := \|\varphi(X)\|^{M[X\mapsto A]}$.

Fix a $\ck$-model $M = \tuple{W, W^\bot,\preceq, R, V}$.
Given a $\mu$-formula $\varphi$, define the operator $\Gamma_{\varphi(X)}(A) := \|\varphi(A)\|^M$.
We denote the \emph{composition} of the relations $\preceq$ and $R$ by $\preceq;R$.
Define the valuation of the $\mu$-sentences over $M$ by induction on the structure of the formulas:

\begin{minipage}[t]{.35\textwidth}
    \begin{itemize}
        \item $\|P\|^M = V(P)$;
        \item $\|\bot\|^M = W^\bot$;
    \end{itemize}
\end{minipage}
\begin{minipage}[t]{.55\textwidth}
    \begin{itemize}
        \item $\|\varphi\land\psi\|^M = \|\psi\|^M \cap \|\psi\|^M$;
        \item $\|\varphi\lor\psi\|^M = \|\psi\|^M \cup \|\psi\|^M$;
    \end{itemize}
\end{minipage}

\smallskip

\begin{minipage}[t]{.9\textwidth}
    \begin{itemize}
        \item $\|\varphi\rightarrow\psi\|^M = \{w\mid \text{for all $v$, if } w\preceq v \text{, then } v\in \|\varphi\|^M \text{ implies } v\in\|\psi\|^M \}$;
        \item $\|\Box\varphi\|^M = \{ w \mid \text{for all $v$, if } w \preceq;R v \text{, then }v\in\|\varphi\|^M\}$;
        \item $\|\Diamond\varphi\|^M = \{ w \mid \text{for all $v$, if $w \preceq v$, then there is $v R u$ such that }u\in\|\varphi\|^M\}$;
        \item $\|\mu X.\varphi(X)\|^M$ is the least fixed-point of the operator $\Gamma_{\varphi(X)}$; and
        \item $\|\nu X.\varphi(X)\|^M$ is the greatest fixed-point of the operator $\Gamma_{\varphi(X)}$.
    \end{itemize}
\end{minipage}

\noindent We write $M,w\models\varphi$ when $w\in\|\varphi\|^M$.
We omit the reference to the model $M$ when it is clear from the context and write $w\in\|\varphi\|$ and $w\models\varphi$.
If $\varphi$ is a sentence, we write $\ck\models\varphi$ iff, for all $\ck$-model $M = \tuple{W, W^\bot ,\preceq, R, V}$ and $w\in W$, we have $M,w\models\varphi$.

The valuation of the fixed-point formulas $\mu X.\varphi$ and $\nu X.\varphi$ is well-defined:
\begin{proposition}
    \label{prop::monotoness_of_Gamma_varphi}
    Fix a $\ck$-model $M=\tuple{W, W^\bot,\preceq,R, V}$ and a formula $\varphi(X)$ with $X$ positive.
    Then the operator $\Gamma_{\varphi(X)}$ is monotone.
    Therefore the valuations of the fixed-point formulas $\mu X.\varphi$ and $\nu X.\varphi$ are well-defined.
\end{proposition}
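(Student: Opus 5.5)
We prove a stronger statement by structural induction on $\varphi$, simultaneously for positivity and negativity. For every formula $\varphi$, variable $X$, augmented model $M$ (with arbitrary valuations for the other free variables), and sets $A\subseteq B\subseteq W$:
\begin{itemize}
    \item if $X$ is positive in $\varphi$, then $\|\varphi(A)\|^M\subseteq\|\varphi(B)\|^M$;
    \item if $X$ is negative in $\varphi$, then $\|\varphi(B)\|^M\subseteq\|\varphi(A)\|^M$.
\end{itemize}
The simultaneous induction is needed because the implication clause swaps polarities. The claim for $\Gamma_{\varphi(X)}$ is the first item. Well-definedness then follows from the Knaster--Tarski theorem on the complete lattice $\mathcal{P}(W)$. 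That theorem gives the least fixed point $\bigcap\{A\mid \Gamma(A)\subseteq A\}$ and the greatest fixed point $\bigcup\{A\mid A\subseteq\Gamma(A)\}$.

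The base cases are immediate:
\begin{itemize}
    \item For $P$, $\bot$, and variables $Y\neq X$, the valuation does not depend on the value of $X$, so both inclusions are equalities.
    \item For $X$ itself, which is only positive, we have $\|X(A)\|=A\subseteq B=\|X(B)\|$.
\end{itemize}

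The connective cases go as follows:
\begin{itemize}
    \item For $\land$ and $\lor$, the induction hypothesis on both components gives the result, since $\cap$ and $\cup$ are monotone.
    \item For $\Box\psi$, take $w\in\|\Box\psi(A)\|$. Every $v$ with $w\preceq;R\,v$ lies in $\|\psi(A)\|\subseteq\|\psi(B)\|$.
    \item For $\Diamond\psi$, the mixed quantifier pattern is unproblematic. The witness $u$ obtained for $A$ also serves for $B$.
    \item The negative polarity is handled symmetrically in all of these cases.
\end{itemize}

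The key step is the implication case $\varphi_1\to\varphi_2$. Suppose $X$ is positive in the implication, so $X$ is negative in $\varphi_1$ and positive in $\varphi_2$. Let $w\in\|(\varphi_1\to\varphi_2)(A)\|$, and take $w\preceq v$ with $v\in\|\varphi_1(B)\|$. By the induction hypothesis for negativity, $v\in\|\varphi_1(A)\|$. Hence $v\in\|\varphi_2(A)\|\subseteq\|\varphi_2(B)\|$. The negative case is dual.

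The fixed-point case $\eta Y.\psi$ needs care. If $Y=X$, then $X$ does not occur free, and the valuation is constant in $A$. If $Y\neq X$, the polarity of $X$ in $\eta Y.\psi$ is inherited from $\psi$. The induction hypothesis applied to $\psi$ with $X$ varying, and $Y$ fixed at an arbitrary $C$, gives $\Gamma^{A}_{\psi}(C)\subseteq\Gamma^{B}_{\psi}(C)$ for all $C$ in the positive case. Here $\Gamma^{A}_{\psi}$ denotes the operator for $\psi$ in $M[X\mapsto A]$. The induction hypothesis also shows that these operators are monotone in $Y$, since $Y$ is positive in $\psi$, so their fixed points exist.

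It remains to compare fixed points of pointwise-ordered monotone operators, which is routine using the Knaster--Tarski characterisations:
\begin{itemize}
    \item For $\mu$, every $B$-prefixed point $\Gamma^B(C)\subseteq C$ is also an $A$-prefixed point, since $\Gamma^A(C)\subseteq\Gamma^B(C)\subseteq C$. So the intersection defining $\mu$ for $A$ ranges over a larger family and is smaller.
    \item For $\nu$, the same argument applies dually to postfixed points.
\end{itemize}
This fixed-point case is the main obstacle, because the stated polarity definition is terse there. It is also why the induction must quantify over arbitrary augmentations of $M$.
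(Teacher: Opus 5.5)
Your proof is correct and takes the same route as the paper's appendix: a structural induction that proves the positive and negative cases together, uses the polarity swap in the antecedent of $\to$ as the key step, and then applies Knaster--Tarski for well-definedness. Your handling of fixed-point subformulas $\eta Y.\psi$ with $Y\neq X$ is more complete than the paper's, which only treats the trivial case where the bound variable is $X$ itself; you quantify over augmented models and compare the least and greatest fixed points of pointwise-ordered monotone operators.
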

\begin{proof}
    The monotonicity is proved by structural induction on the $\mu$-formulas.
    The existence of least and greatest fixed-points of monotone operators is guaranteed by the Knaster--Tarski Theorem \cite{arnold2001rudiments}.
    For a detailed proof, see the Appendix~\ref{appendix::proof-A}.
\end{proof}

\subsection{Language Extensions}
In our proofs, we will also use \emph{local diamonds} and \emph{approximants} of fixed-point formulas.
Local diamonds allow us to express the truth of a formula in a locally accessible world, they will be useful in both the game semantics and the proof systems.
The $\alpha$th approximant $\mu X^\alpha.\varphi$ of $\mu X.\varphi$ is obtained by applying $\Gamma_\varphi(X)$ to $\emptyset$, $\alpha$ many times; similarly, the approximant $\nu X^\alpha.\varphi$ is obtained by applying $\Gamma_\varphi(X)$ to $W$, $\alpha$ many times.
Approximants will play a central role in the proof of the correctness of the game semantics.

\begin{definition}
    Let $M$ be a $\ck$-model and $\varphi(X)$ be a formula where $X$ is positive.
    Then:
    \[
        \|\hat\Diamond\varphi\|^M := \{ w \mid \text{there is $v$ such that } wRv\text{ and } v\in\|\varphi\|^M\}.
    \]
\end{definition}
\noindent It is easy to show that there is a model $M$ and worlds $w,v$ such that $w\preceq v$, $w\models\hat\Diamond$, and $v\not\models\hat\Diamond$.

\begin{definition}
    Let $M$ be a $\ck$-model and $\varphi(X)$ be a formula where $X$ is positive.
    The \emph{approximants} of $\mu X.\varphi$ and $\mu X.\varphi$ on $M$ are defined by:
    \begin{itemize}
    \item $\|\mu X^0.\varphi\|^M = \emptyset$, $\|\nu X^0.\varphi\|^M = W$;
    \item $\|\mu X^{\alpha+1}.\varphi\|^M = \|\varphi(\|\mu X^\alpha.\varphi\|^M)\|$, $\|\nu X^{\alpha+1}.\varphi\|^M = \|\varphi(\|\nu X^\alpha.\varphi\|^M)\|$, where $\alpha$ is any ordinal; and
    \item $\|\mu X^\lambda.\varphi\|^M = \bigcup_{\alpha<\lambda} \|\mu X^\alpha.\varphi\|^M$, $\|\nu X^\lambda.\varphi\|^M = \bigcap_{\alpha<\lambda} \|\nu X^\alpha.\varphi\|^M$, where $\lambda$ is a limit ordinal.
\end{itemize}
\end{definition}
\noindent Note that, for all $\ck$-model $M$ and $\mu$-sentence $\varphi(X)$ with $X$ positive there are ordinal numbers $\alpha$ and $\beta$ such that $\|\mu X.\varphi\|^M = \|\mu X^\alpha.\varphi\|^M$ and $\|\nu X.\varphi\|^M = \|\nu X^\beta.\varphi\|^M$.

\section{Game Semantics}
\label{sec::game-semantics}
In this section, we define game semantics for the constructive $\mu$-calculus and prove its equivalence to the birelational semantics.
This game semantics is a modification of the game semantics of the classical $\mu$-calculus.

In evaluation games for the classical $\mu$-calculus, the players Verifier and Refuter discuss whether a formula $\varphi$ holds in a world $w$ of a Kripke model $M$.
While in the classical $\mu$-calculus we can suppose formulas use no implications and that negations are applied only to propositional symbols, we cannot do the same in the constructive $\mu$-calculus.
This complicates the games used for the constructive $\mu$-calculus: the players now have the \emph{roles} of Refuter and Verifier, and swap roles when discussing certain formulas.
Furthermore, we also need to consider the fact that constructive diamonds are not purely existential, and so both players are involved in choosing the next move.

\subsection{Definition of Game Semantics}
\label{subsec::def-game-semantics}
Fix a $\ck$-model $M=\tuple{W,W^\bot,\preceq,R,V}$, a world $w\in W$, and a well-named sentence $\varphi$.
In this subsection, we define the \emph{evaluation game} $\mathcal{G}(M,w\models\varphi)$.

The game $\mathcal{G}(M,w\models\varphi)$ has two players: $\mathsf{I}$ and $\mathsf{II}$.
The two players will have the roles of Verifier and Refuter (abbreviated to $\verifier$ and $\refuter$, respectively).
Each player has only one role at any given time, and the players always have different roles.
We usually write ``$\mathsf{I}$ is $\verifier$'' for ``the player $\mathsf{I}$ has the role of $\verifier$'', and similar expressions for the other combination of players and games.
We denote an arbitrary role by $\role$ and the dual role by $\dualrole$; that is, if $\role$ is $\verifier$, then $\dualrole$ is $\refuter$ and \emph{vice versa}.

The game has two types of positions.
The main positions of the game are of the form $\tuple{v, \psi, \role}$ where $v\in W$, $\psi\in\sub{\varphi}$, and $\role$ is a role.
We also have auxiliary positions of the form $\tuple{v, \hat\Diamond\psi, \role}$ where $\Diamond\psi\in\sub{\varphi}$ and $\tuple{v, \psi?\theta, \role}$ where $\psi\to\theta\in\sub{\varphi}$, for all $v\in W$.
In any position, $\role$ is the role currently held by $\mathsf{I}$; the role of $\mathsf{II}$ is $\dualrole$.
Intuitively, at a position $\tuple{v,\psi, \role}$, $\verifier$ tries to show that $v$ satisfies $\psi$ and $\refuter$ tries to show that $v$ does not satisfy $\psi$.

These auxiliary positions make explicit that both players have input on the next main position after $\tuple{v,\Diamond\psi, \role}$ or $\tuple{v,\psi\to\theta, \role}$.
For example, at the position $\tuple{v,\Diamond\psi, \role}$, $\refuter$ first makes a choice of $v'$ such that $v\preceq v'$ and then $\verifier$ picks $v''$ such that $v' R v''$.
By a similar reasoning, we also need auxiliary positions for $\tuple{v,\theta\to\theta', \role}$.

The game begin at the position $\tuple{w,\varphi,\verifier}$, with $\mathsf{I}$ in the role of $\verifier$ and $\mathsf{II}$ in the role of $\refuter$.

Each position is owned by exactly one of the players.
At each turn of the game, the owner of the current position chooses one of the available positions to move to.
The game then continues with the new position.
If no such position is available, the game ends.
The ownership and available plays for each of position are described in Table \ref{table::intuitionistic-evaluation_game}:
\begin{table}[htb]\renewcommand\arraystretch{1.2}
    \caption{Rules of evaluation games for the constructive modal $\mu$-calculus.}
    \begin{multicols}{2}
    \begin{tabular}{c|c}
    \multicolumn{2}{c}{Verifier}\\
    \hline
    Position &  Admissible moves\\

    $\tuple{v,P, \role}$ and $v \not\in V(P)$  &  $\emptyset$ \\

    $\tuple{v,\bot, \role}$ and $v \not\in W^\bot$  &  $\emptyset$ \\

    $\tuple{v, \psi \lor \theta, \role}$ &  $\{\tuple{v, \psi, \role} , \tuple{v,\theta, \role} \}$\\

    $\tuple{v,\psi?\theta, \role}$ & $\{\tuple{v,\psi, \dualrole}, \tuple{v,\theta, \role}\}$ \\

    $\tuple{v, \hat\Diamond\psi, \role}$ & $\{ \tuple{u, \psi, \role} \mid vR u \}$ \\

    $\tuple{v,\mu X.\psi_X, \role}$ &  $\{\tuple{v,\psi_X, \role} \}$ \\

    $\tuple{v,X, \role}$ & $\{ \tuple{v,\mu X.\psi_X, \role} \}$ \\
    \end{tabular} \\
    \begin{tabular}{c|c}
    \multicolumn{2}{c}{Refuter}\\
    \hline
    Position &  Admissible moves\\
    $\tuple{v, P, \role}$ and $v \in V(P)$  &  $\emptyset$ \\

    $\tuple{v, \bot, \role}$ and $v \in W^\bot$  &  $\emptyset$ \\

    $\tuple{v,\psi \land \theta, \role}$ &  $\{ \tuple{v,\psi, \role} , \tuple{v,\theta, \role} \}$ \\

    $\tuple{v,\psi\to\theta, \role}$ & $\{\tuple{u,\psi?\theta, \role} \mid v\preceq v\}$ \\

    $\tuple{v, \Box \psi, \role}$ & $\{ \tuple{u, \psi, \role} \mid  v \preceq;R u\}$ \\

    $\tuple{v, \Diamond \psi, \role}$ & $\{ \tuple{u, \hat\Diamond\psi, \role} \mid v\preceq u \}$ \\

    $\tuple{v,\nu X.\psi_X, \role}$ &  $\{\tuple{v,\psi_X}, \role\}$ \\

    $\tuple{v,X, \role}$ & $\{ \tuple{v,\nu X.\psi_X, \role} \}$ \\
    \end{tabular}
    \end{multicols}
    \label{table::intuitionistic-evaluation_game}
\end{table}

A \emph{run} of the game is a complete sequence of positions which respects the rules above.
That is, a run $\rho$ is a (finite or infinite) sequence $\tuple{v_0,\psi_0,\role_0}, \tuple{v_1,\psi_1,\role_1}, \tuple{v_2,\psi_2,\role_2}, \dots$ of positions such that:
\begin{itemize}
    \item $\tuple{v_0,\psi_0,\role_0}$ is $\tuple{w,\varphi,\verifier}$;
    \item it is possible to play $\tuple{v_{n+1},\psi_{n+1},\role_{n+1}}$ from $\tuple{v_n,\psi_n,\role_n}$; and
    \item if $\rho$ is finite, then the last position in $\rho$ is of the form $\tuple{v,P,\role}$ with $P\in\mathrm{Prop}$ or $\tuple{v,\psi,\role}$ with $v\in W^\bot$.
\end{itemize}

Before defining the winning conditions, we note that the positivity requirement on the fixed-point formulas guarantees that, if $\tuple{v,\eta X.\psi_X, \role}$ and $\tuple{v',\eta X.\psi_X, \role'}$ occur in any run of the game, then $\mathsf{I}$ has the same role in both positions:
\begin{proposition}
    \label{prop::ownership-well-defined}
    Let $\rho$ and $\rho'$ be a runs of the game $\mathcal{G}(M,w\models\varphi)$.
    Suppose $\tuple{v,\eta X. \psi_X, \role}$ occurs in $\rho$ and $\tuple{v',\eta X. \psi_X, \role'}$ occurs in $\rho'$.
    Then $\role = \role'$; that is, $\mathsf{I}$ has the same role at both positions.
\end{proposition}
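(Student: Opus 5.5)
The idea is to attach to each occurrence of a subformula of $\varphi$ a \emph{polarity}, and show that the role of $\mathsf{I}$ at any position is determined by the polarity of the formula component. Since $\varphi$ is well-named, every fixed-point subformula $\eta X.\psi_X$ occurs exactly once in $\varphi$ as a syntactic occurrence, so it has a single polarity. The claim $\role=\role'$ then follows immediately. Concretely, I would define the polarity $\mathrm{pol}(\chi)\in\{+,-\}$ of a subformula occurrence $\chi$ of $\varphi$ by induction from the root. The root $\varphi$ gets $+$. The immediate subformulas of $\wedge,\vee,\Box,\Diamond,\mu,\nu$ inherit the polarity of the parent. In $\psi\to\theta$, the consequent $\theta$ inherits it and the antecedent $\psi$ gets the flipped polarity. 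I extend this to the auxiliary formulas by letting $\hat\Diamond\psi$ carry the polarity of $\Diamond\psi$ and $\psi?\theta$ carry the polarity of $\psi\to\theta$. The invariant to establish is: at every position $\tuple{v,\chi,\role}$ of any run, $\role=\verifier$ iff $\mathrm{pol}(\chi)=+$.

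The invariant is proved by induction along the run, checking each row of Table~\ref{table::intuitionistic-evaluation_game}. At the start, $\tuple{w,\varphi,\verifier}$ holds trivially. All moves preserve $\role$ except the move from $\tuple{v,\psi?\theta,\role}$ to $\tuple{v,\psi,\dualrole}$. There the polarity also flips, because $\psi$ is an antecedent, so the invariant is kept. The moves $\eta X.\psi_X\mapsto\psi_X$, $\Diamond\psi\mapsto\hat\Diamond\psi\mapsto\psi$ and $\psi\to\theta\mapsto\psi?\theta\mapsto\theta$ preserve both role and polarity by definition.

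The only delicate case, and the main obstacle, is the regeneration move from $\tuple{v,X,\role}$ to $\tuple{v,\eta X.\psi_X,\role}$. Here the syntactic tree jumps back upward, so I must show that the unique occurrence of $X$ in $\psi_X$ has the same polarity as $\eta X.\psi_X$. Since $\mathrm{pol}$ is computed relative to the root, this amounts to showing that the path from $\eta X.\psi_X$ down to $X$ passes through an even number of antecedent positions. This is exactly where the positivity requirement enters. I would prove, by structural induction along the clauses of the definition of positive/negative, that if $X$ is positive (respectively negative) in $\chi$, then every free occurrence of $X$ in $\chi$ lies under an even (respectively odd) number of implication antecedents relative to $\chi$. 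The clause for $\to$ swaps parity exactly as it swaps positive and negative. Note also that, because $\varphi$ is well-bounded, $X$ is not rebound inside $\psi_X$. Hence the polarity of $X$ equals $\mathrm{pol}(\eta X.\psi_X)$, and the regeneration move preserves the invariant.

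Finally, given $\tuple{v,\eta X.\psi_X,\role}$ in $\rho$ and $\tuple{v',\eta X.\psi_X,\role'}$ in $\rho'$, both positions refer to the same unique occurrence of $\eta X.\psi_X$, by well-boundedness. By the invariant, $\role$ and $\role'$ are both determined by $\mathrm{pol}(\eta X.\psi_X)$, so $\role=\role'$.
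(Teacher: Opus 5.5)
Your proof is correct and is essentially the paper's argument. Roles switch only when passing to an implication's antecedent; positivity of $X$ in $\psi_X$ makes the return from $X$ to $\eta X.\psi_X$ cross an even number of antecedents; and well-namedness fixes the path from the root to the unique occurrence of $\eta X.\psi_X$. You package the paper's two steps (within one run, then across runs) into a single polarity invariant, which is the cleaner way to write it. The only thing to tidy is that the invariant should refer to the occurrence tracked along the run, not to the formula $\chi$ itself, since a non-fixed-point subformula such as $P$ in $P\to P$ can occur with both polarities. This is harmless: any formula from which a fixed-point position is reachable contains a binder or a bound variable, and hence occurs only once in $\varphi$.
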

\begin{proof}
    See Appendix~\ref{appendix::proof-B}.
\end{proof}
We say that the fixed-point formula \emph{$\eta X.\psi_X$ is owned by $\mathsf{I}$} if a position of the form $\tuple{v,\eta X.\psi_X,\role}$ is reachable from the initial position, and either $\role = \verifier$ and $\eta = \nu$, or $\role = \refuter$ and $\eta = \mu$.
The fixed-point formula $\eta X.\psi_X$ is \emph{owned by $\mathsf{II}$} if it is not owned by $\mathsf{I}$.
This is well-defined by Proposition \ref{prop::ownership-well-defined}.

We are now ready to define the \emph{winning conditions} for the game.
When moving from $\tuple{v,X, \role}$ to $\tuple{v,\eta X.\psi_X, \role}$, we say that the fixed-point formula $\eta X.\psi$ was \emph{regenerated}.
Let $\rho$ be a run of the game.
If $\rho$ is finite, then the last position in $\rho$ is of the form $\tuple{v,P,\role}$ with $P\in\mathrm{Prop}$ or $\tuple{v,\psi,\role}$ with $v\in W^\bot$.
The owner of the last position has no available position and loses the game.
If $\rho$ is infinite, let $\eta X.\psi_{X}$ outermost infinitely often regenerated fixed-point formula in the play $\rho$; that is, $\eta X$ is regenerated infinitely often in $\rho$ and, if $\eta' Y$ is regenerated infinitely often in $\rho$, then $\eta' Y.\psi_Y\in\mathrm{Sub}(\eta X.\psi_X)$.
Then $\mathsf{I}$ wins $\rho$ iff $\mathsf{I}$ owns the fixed-point $\eta X$.

A \emph{(positional) strategy} for $\mathsf{I}$ is a function $\sigma$ which, given a position $\tuple{v,\psi, \role}$ owned by $\mathsf{I}$, outputs a position $\tuple{v',\psi', \role}$ where $\mathsf{I}$ can move to, if any such position is available.
$\mathsf{I}$ follows $\sigma$ in the run $\rho = \tuple{v_0,\psi_0,\role_0}, \tuple{v_1,\psi_1,\role_1}, \tuple{v_2,\psi_2,\role_2}, \dots$ if whenever $\tuple{v_n}, \psi_n, \role_n$ is owned by $\mathsf{I}$, then $\tuple{v_{n+1}}, \psi_{n+1}, \role_{n+1} = \sigma(\tuple{v_n}, \psi_n, \role_n)$.
The strategy $\sigma$ is \emph{winning} iff $\mathsf{I}$ wins all possible runs where they follow $\sigma$.
Strategies and winning strategies for $\mathsf{II}$ are defined similarly.

It is not immediate that that one of the players has a winning strategy for a given evaluation game.
The existence of winning strategies is implied by our proof of the equivalence of the birelational Kripke semantics and game semantics for the constructive $\mu$-calculus:
\begin{theorem}
    Fix a $\ck$-model $M=\tuple{W,W^\bot,\preceq,R,V}$, a world $w\in W$, and a well-named sentence $\varphi$.
    Let $\mathcal{G}(M,w\models\varphi)$ be an evaluation game.
    Then only one of $\mathsf{I}$ and $\mathsf{II}$ has a positional winning strategy $\mathcal{G}(M,w\models\varphi)$.
\end{theorem}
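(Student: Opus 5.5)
We prove the stronger statement that $M,w\models\varphi$ iff $\mathsf{I}$ has a positional winning strategy in $\mathcal{G}(M,w\models\varphi)$, and $M,w\not\models\varphi$ iff $\mathsf{II}$ has one. Since $\mathsf{I}$ and $\mathsf{II}$ cannot both have winning strategies (playing them against each other produces a single run won by both, which is absurd), this yields the theorem. So we must build, from each semantic fact, a positional winning strategy for the corresponding player.

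To make the induction work we generalise the claim to all positions. We say that a position $\tuple{v,\psi,\role}$ is \emph{true} if either $\role=\verifier$ and $v\models\psi$, or $\role=\refuter$ and $v\not\models\psi$. Free variables are interpreted by their fixed-point values, computed in the augmented model. For the auxiliary positions we use $\|\hat\Diamond\psi\|$, and we read $\psi?\theta$ as ``$v\not\models\psi$ or $v\models\theta$''. We then check local consistency: at a true position owned by $\mathsf{I}$ there is a move to a true position, and at a true position owned by $\mathsf{II}$ every move leads to a true position. This is a case check against Table~\ref{table::intuitionistic-evaluation_game}, using the clauses of the semantics.

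\begin{itemize}
\item For $\to$: if $v\models\psi\to\theta$ and $\refuter$ picks $u\succeq v$, then $u$ satisfies $\psi?\theta$, and $\verifier$ chooses the role-swapped $\tuple{u,\psi,\dualrole}$ when $u\not\models\psi$, and $\tuple{u,\theta,\role}$ otherwise.
\item For $\Diamond$: every $\preceq$-successor satisfies $\hat\Diamond\psi$, which supplies the $R$-witness.
\item The refuting cases are dual. Persistence of valuations is what makes the $\preceq$-moves behave correctly.
\item Finite runs are then won correctly, since true terminal positions belong to the losing opponent; the fallible worlds are handled because $W^\bot$ satisfies every formula.
\end{itemize}

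Local consistency alone does not rule out infinite plays won by the wrong player, so we refine it with signatures in the classical style of Streett--Emerson. Enumerate the fixed-point subformulas $\eta X_1.\psi_1,\dots,\eta X_n.\psi_n$ owned by $\mathsf{II}$ in order of nesting. By Proposition~\ref{prop::ownership-well-defined}, ownership is fixed, so these are exactly the fixed points where $\mathsf{I}$ must prevent infinite regeneration; they are the $\mu$'s played while $\mathsf{I}$ is $\verifier$ and the $\nu$'s played while $\mathsf{I}$ is $\refuter$.

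A \emph{signature} is a tuple of ordinals $\sig{\alpha_1,\dots,\alpha_n}$. At a true position we replace the fixed point $X_i$ by its approximant of index $\alpha_i$, which is a $\mu$-approximant in the verifying role or a $\nu$-approximant in the refuting role, where $\nu X^{\alpha}$ failing is the relevant notion. We choose the lexicographically least signature making the position remain true.

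The strategy for $\mathsf{I}$ picks a successor that is true with least signature. The key lemma states two things along any play consistent with this strategy:
\begin{itemize}
\item signatures never increase lexicographically on the components at or outside the current formula, except on components of fixed points that are unfolded afresh;
\item regenerating $X_i$ strictly decreases the $i$-th component while the outer components do not increase.
\end{itemize}
The proof combines the approximant definition with monotonicity (Proposition~\ref{prop::monotoness_of_Gamma_varphi}), and the limit-ordinal case uses that least signatures are attained. By well-foundedness, no fixed point owned by $\mathsf{II}$ can be the outermost one regenerated infinitely often, so $\mathsf{I}$ wins. The strategy depends only on the position, so it is positional.

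The symmetric argument for $\mathsf{II}$, using signatures on the fixed points owned by $\mathsf{I}$, gives the other direction. Because roles can swap, ``true'' must be defined relative to the role and not relative to a fixed player. I expect the main obstacle to be the signature-decrease lemma across role swaps at $\psi?\theta$. There an approximant occurring negatively, namely in the antecedent once the role is flipped, must be approximated from the correct side. I would handle this by defining the approximation polarity-sensitively, using $\mu$-approximants from below when the role is $\verifier$ and from above otherwise, consistently with how ownership was defined. I would then verify monotonicity of truth in the signature separately for each polarity.
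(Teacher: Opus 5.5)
Your proposal is correct and follows the paper's route: exclusivity by playing the two strategies against each other, and existence via the approximant/signature argument that proves Theorem~\ref{thm::correctness-game-semantics} in Appendix~\ref{appendix::proof-C}. Indexing $\mathsf{I}$'s signatures by the fixed points owned by $\mathsf{II}$ is the choice consistent with the winning condition (the appendix enumerates those owned by $\mathsf{I}$ instead, which appears to be a slip), and the role-swap obstacle you anticipate does not arise: approximating each variable according to its binder already makes truth-for-$\mathsf{I}$ monotone in the signature at every position, because role swaps coincide exactly with polarity flips.
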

\begin{proof}
    Suppose $\sigma$ is a winning strategy for $\mathsf{I}$ and $\tau$ is a winning strategy for $\mathsf{II}$.
    Then the play resulting from the players using $\sigma$ and $\tau$ is winning for both $\mathsf{I}$ and $\mathsf{II}$, which is not possible by the definition of the game.
    So at most one of the players has a winning strategy.

    Now, by the definition of the birelational semantics, either $M,w\models\varphi$ or $M,w\not\models\varphi$.
    Theorem \ref{thm::correctness-game-semantics} provides us strategies in both cases.
    In case $M,w\models\varphi$ holds, $\mathsf{I}$ has a winning strategy; in case $M,w\not\models\varphi$ holds, $\mathsf{II}$ has a winning strategy.
\end{proof}
Note that the existence of winning positional strategies for the evaluation games is equivalent to the existence of positional winning strategies for parity games.
The existence of such strategies was first proved by Emerson and Jutla \cite{emerson1991tree}, and is equivalent to the relatively strong principle of determinacy $\forall n.(\Sigma^0_2)_n$-$\mathsf{Det}$ \cite{leszek2016rabin}.
See \cite{gradel2003automata} for more on the relation between the $\mu$-calculus and parity games.

\subsection{Equivalence of Kripke and Game Semantics}
\label{subsec::correctness-game-semantics}
We now show the equivalence between the constructive $\mu$-calculus' birelational semantics and game semantics.
That is, we will show that $M,w\models\varphi$ iff the player $\mathsf{I}$ has a winning strategy for $\mathcal{G}(M,w\models\varphi)$, and that $M,w\not\models\varphi$ iff the player $\mathsf{II}$ has a winning strategy for $\mathcal{G}(M,w\models\varphi)$.
We briefly sketch the key idea and remark on some technical points before stating the theorem.

Fix an evaluation game $\mathcal{G}(M,w\models\varphi)$.
The position $\tuple{v,\psi, \role}$ is a \emph{true position} iff $M,w\models \psi$, and a \emph{false position} iff $M,w\not\models\psi$.
We will show that, at a true positions, $\verifier$ can always move in a way favorable to themselves.
That is, in a way such that the resulting position is a true position, if the players have not switched roles; or the resulting position is a false position, if the players have switched roles.
On the other hand, $\refuter$ cannot move in any way favorable to themselves.
A similar situation occurs at false positions.
To make the statements above precise, we have to overcome two problems.

First, when considering whether $M, v\models\psi$, the formula $\psi$ might have free variables, and so its valuation might not be well-defined.
We solve this by augmenting $M$ with the intended valuations for the variables occurring in $\varphi$.

Second, we need to consider infinite plays.
Specifically, we need to guarantee that, when starting from a true position, the resulting play is winning for $\mathsf{I}$; and when starting from a false position, the resulting play is winning for $\mathsf{II}$.
To solve this, we assign two signatures to each position; both are finite sequences of ordinals ordered with the lexical order.
They will be the key tool to guarantee the resulting plays are winning for the corresponding players.

\begin{theorem}
    \label{thm::correctness-game-semantics}
    The birelational Kripke semantics and game semantics for $\ck$ are equivalent.
    That is, if $M=\tuple{W, W^\bot,\preceq,R,V}$ is a $\ck$-model, $w\in W$ and $\varphi$ is a well-named $\mu$-sentence, then:
    \begin{itemize}
        \item $\mathsf{I}$ has a winning strategy for $\mathcal{G}(M,w\models\varphi)$ if and only if $M,w\models\varphi$; and
        \item $\mathsf{II}$ has a winning strategy for $\mathcal{G}(M,w\models\varphi)$ if and only if $M,w\not\models\varphi$.
    \end{itemize}
\end{theorem}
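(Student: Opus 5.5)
The plan is the classical signature argument of Streett and Emerson, adapted to the role-switching game. I will prove two ``if'' directions directly: from $M,w\models\varphi$ I construct a winning strategy for $\mathsf{I}$, and from $M,w\not\models\varphi$ a winning strategy for $\mathsf{II}$. The two ``only if'' directions then follow at once. Exactly one of $M,w\models\varphi$ and $M,w\not\models\varphi$ holds, and two players cannot both have winning strategies, since the play produced by both strategies would be won by both. So if $\mathsf{I}$ wins, $M,w\not\models\varphi$ is impossible, because it would give $\mathsf{II}$ a winning strategy; symmetrically for $\mathsf{II}$.

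First I fix intended valuations for the bound variables. List the fixed-point subformulas $\eta_1X_1.\psi_{X_1},\dots,\eta_nX_n.\psi_{X_n}$ of $\varphi$ in an order extending the subformula order, so outer ones come first. For a position $\tuple{v,\psi,\role}$, evaluate $\psi$ in the model augmented so that each $X_i$ free in $\psi$ denotes $\|\eta_iX_i.\psi_{X_i}\|$, computed inductively from the outermost inward. Call the position \emph{good for $\mathsf{I}$} if either $\role=\verifier$ and $v\models\psi$, or $\role=\refuter$ and $v\not\models\psi$. Auxiliary positions are handled the same way:
\begin{itemize}
\item $\tuple{v,\hat\Diamond\psi,\role}$ is interpreted by $\hat\Diamond$;
\item $\tuple{v,\psi?\theta,\role}$ is true iff $v\not\models\psi$ or $v\models\theta$.
\end{itemize}

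The local step is a case check against Table~\ref{table::intuitionistic-evaluation_game}. At a good position owned by $\mathsf{I}$, some move leads to a good position. At a good position owned by $\mathsf{II}$, every move leads to a good position. Good terminal positions are lost by $\mathsf{II}$. The implication case illustrates this. If $v\models\psi\to\theta$, every refuter choice $v\preceq u$ gives a true $\tuple{u,\psi?\theta}$. From there the verifier either switches roles to $\tuple{u,\psi,\dualrole}$ with $u\not\models\psi$, which is good for the same player since the roles flip, or goes to $\theta$. If $v\not\models\psi\to\theta$, the refuter picks $u$ with $u\models\psi$ and $u\not\models\theta$, and both options at $u$ stay bad. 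The diamond case splits the universal choice over $\preceq$ from the existential choice over $R$. Both rely on persistence, meaning truth sets are upward closed under $\preceq$, which I would verify by induction on formulas with the variables valued in upward-closed sets; fixed points preserve this. Since $\psi\to\theta$ refers only to $\preceq$-successors, no extra persistence is needed beyond the semantics. Since the whole setup is symmetric in $\mathsf{I}$ and $\mathsf{II}$, the same argument applied to ``bad'' positions gives $\mathsf{II}$'s strategy.

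The main obstacle is infinite plays: staying on good positions only guarantees that $\mathsf{I}$ never loses a finite play. Here I use signatures. For a good position, consider the fixed points owned by $\mathsf{II}$ in the sense defined after Proposition~\ref{prop::ownership-well-defined}. These are the $\mu$'s while $\mathsf{I}$ verifies and the $\nu$'s while $\mathsf{I}$ refutes; $\mathsf{I}$ must not let them be regenerated forever. Replace each such fixed point by its least approximant, $\mu X^\alpha$ or $\nu X^\alpha$ respectively, that still keeps the position good. The approximant facts after the approximant definition supply these ordinals, and for $\nu$ while refuting, badness is witnessed at a successor stage. The signature $\sig{v,\psi,\role}$ is the tuple of these ordinals, listed outermost first and ordered lexicographically. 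Strategy $\mathsf{I}$ picks a good successor that does not increase the signature. At a regeneration of such an $X_i$ the signature strictly decreases in the $i$-th component, while outer components stay unchanged; the role bookkeeping of Proposition~\ref{prop::ownership-well-defined} guarantees that the relevant components are well-defined along a play. Hence if the outermost infinitely-regenerated fixed point were owned by $\mathsf{II}$, the signatures would descend infinitely in a well-order, which is impossible. The delicate part is checking monotonicity of signatures through the role switches at implications, where verified and refuted fixed points trade places. I would handle this by indexing signature components by fixed point rather than by role, so a switch changes only which clause defines each component, not the lexicographic comparison. Positionality holds because the choice depends only on the position.
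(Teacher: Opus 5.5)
Your proposal follows the same route as the paper's proof: prove only the two ``if'' directions with an ordinal-signature (approximant) argument under the lexicographic well-order, then get the converses because the two players cannot both win while exactly one of $M,w\models\varphi$ and $M,w\not\models\varphi$ holds. You attach approximants to the fixed points owned by $\mathsf{II}$ (the $\mu$'s that $\mathsf{I}$ verifies and the $\nu$'s that $\mathsf{I}$ refutes), which is the correct orientation and is worth keeping, since the paper's appendix attaches them to those owned by $\mathsf{I}$, inverted relative to its own winning condition; and the descent should be read on the length-$i$ prefixes of the signatures at successive occurrences of $X_i$, because unfolding inner fixed points can raise later components.
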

\begin{proof}
    See Appendix~\ref{appendix::proof-C}.
\end{proof}

\section{Non-wellfounded proof system}
\label{sec::non-wellfounded-proofs}
In this section, we define the fully-labeled non-wellfounded proof system $\muck$ and then prove its soundness and weak completeness with respect to $\ck$-models. 
This proof system combines labeled-sequents and non-wellfounded proofs and is closely related to our game semantics for $\ck$.

\subsection{Basic Definitions}

Fix a set of \emph{world variables} $\mathrm{Var}'$ disjoint from the set of propositional variables $\mathrm{Var}$.
A \emph{labeled formula} is of the form $x:\varphi$ where $x\in\mathrm{Var}'$ and $\varphi$ is a formula.
A \emph{labeled sentence} is a labeled formula $x:\varphi$ where $\varphi$ is a sentence.
Given a labeled formula $x:\varphi$, we say the variable $x$ labels the formula $\varphi$.
When not ambiguous, we call both propositional and world variables by `variables'.

A \emph{sequent} $S$ is of the form $\R, \Gamma \vdash \Delta$, where $\R$ is a finite set of statements of the form $x\preceq y$ or $xRy$ with $x,y\in\mathrm{Var}$; and $\Gamma$, $\Delta$ are finite sets of labeled formulas.
$\ell(S)$ is the collection of variables occurring in $S$, either on the relation $\R$ or as labels for formulas.
Below, we write ``$x\preceq y$'' and ``$xRy$'' to mean $x\preceq y \in \R$ and $xRy \in \R$, respectively; we shall continue doing this below if there is no risk of confusion.
Given finite sets sets of formulas $\Gamma$ and $\Delta$, and a labeled formula $x:\varphi$, we write ``$\Gamma,\Delta$'' for $\Gamma\cup\Delta$ and ``$\Gamma,x:\varphi$'' for $\Gamma\cup \{x:\varphi\}$. 
Similarly, if $\R$ and $\R'$ are finite sets of statements, then we write ``$\R,\R'$'' for $\R\cup\R'$.

\begin{definition}
We say that a finite tree of sequents $T$ is a \emph{pre-proof} if the following conditions hold:
\begin{enumerate}
\item every leaf of $T$ is an \textit{axiom}, \emph{i.e.}, of one of the following forms:
\begin{multicols}{2}
    \begin{prooftree}
        \AxiomC{}
        \LeftLabel{$\bot$l}
        \UnaryInfC{$\R,\Gamma, x:\bot \vdash \Delta, x:P$}
    \end{prooftree}
    \begin{prooftree}
        \AxiomC{}
        \LeftLabel{$\mathsf{id}$}
        \UnaryInfC{$\R,\Gamma, x:P \vdash \Delta, x:P$}
    \end{prooftree}
\end{multicols}
where $P\in\mathrm{Prop}$.
\item each non-leaf node in $T$ and the edges above it corresponds to one of the following inference rules:
\setlength{\columnsep}{1cm}
{ \scriptsize
\begin{multicols}{2}
    \begin{prooftree}
        \AxiomC{$\R, x \preceq y,\Gamma, x:\varphi, y:\varphi \vdash \Delta$}
        \LeftLabel{$\preceq$-pres}
        \UnaryInfC{$\R, x\preceq y, \Gamma, x:\varphi\vdash \Delta$}
    \end{prooftree}

    \begin{prooftree}
        \AxiomC{$\R,\Gamma, x:\varphi\land \psi, \minor{x:\varphi}, \minor{x:\psi} \vdash \Delta$}
        \LeftLabel{$\land$l}
        \UnaryInfC{$\R,\Gamma, \principal{x:\varphi\land \psi}, \vdash \Delta$}
    \end{prooftree}
    \begin{prooftree}
        \AxiomC{$\R,\Gamma, x:\varphi\lor \psi, \minor{x:\varphi} \vdash \Delta$}
        \AxiomC{$\R,\Gamma, x:\varphi\lor \psi, \minor{x:\psi} \vdash \Delta$}
        \LeftLabel{$\lor$l}
        \BinaryInfC{$\R,\Gamma, \principal{x:\varphi\lor \psi} \vdash \Delta$}
    \end{prooftree}
    \begin{prooftree}
        \AxiomC{$\R,\Gamma, x:\varphi\to \psi \vdash \Delta, \minor{x:\varphi}$}
        \AxiomC{$\R,\Gamma, x:\varphi\to \psi, \minor{x:\psi}\vdash \Delta$}
        \LeftLabel{$\to$l}
        \BinaryInfC{$\R,\Gamma, \principal{x:\varphi\to \psi} \vdash \Delta$}
    \end{prooftree}
    \begin{prooftree}
        \AxiomC{$\R, \Gamma, x:\Box \varphi, \{\minor{y:\varphi} \mid xRy\} \vdash \Delta$}
        \LeftLabel{$\Box$l}
        \UnaryInfC{$\R, \Gamma, \principal{x:\Box \varphi} \vdash \Delta$}
    \end{prooftree}
    \begin{prooftree}
        \AxiomC{$\R, x R y,\Gamma, x:\Diamond \varphi, \minor{y:\varphi} \vdash \Delta$}
        \RightLabel{($y$ fresh)}
        \LeftLabel{$\Diamond$l}
        \UnaryInfC{$\R,\Gamma, \principal{x:\Diamond \varphi} \vdash \Delta$}
    \end{prooftree}


    \begin{prooftree}
        \AxiomC{$\R, x \preceq y, y\preceq z, x\preceq z, \Gamma \vdash \Delta$}
        \LeftLabel{$\preceq$-trans}
        \UnaryInfC{$\R, x\preceq y, y \preceq z, \Gamma \vdash \Delta$}
    \end{prooftree}
    \begin{prooftree}
        \AxiomC{$\R,\Gamma \vdash \Delta, x:\varphi\land \psi, \minor{x:\varphi}$}
        \AxiomC{$\R,\Gamma \vdash \Delta, x:\varphi\land \psi, \minor{x:\psi}$}
        \LeftLabel{$\land$r}
        \BinaryInfC{$\R,\Gamma \vdash \Delta, \principal{x:\varphi\land \psi}$}
    \end{prooftree}
    \begin{prooftree}
        \AxiomC{$\R,\Gamma \vdash \Delta, x:\varphi\lor \psi, \minor{x:\varphi}, \minor{x:\psi}$}
        \LeftLabel{$\lor$r}
        \UnaryInfC{$\R,\Gamma\vdash \Delta, \principal{x:\varphi\lor \psi}$}
    \end{prooftree}
    \begin{prooftree}
        \AxiomC{$\R, x \preceq y, \Gamma, y:\varphi \vdash \Delta, x:\varphi\to \psi, \minor{y:\psi}$}
        \LeftLabel{$\to$r}
        \RightLabel{($y$ fresh)}
        \UnaryInfC{$\R, \Gamma \vdash \Delta, \principal{x:\varphi\to \psi}$}
    \end{prooftree}
    \begin{prooftree}
        \AxiomC{$\R, x\preceq y, yRz , \Gamma \vdash \Delta, x:\Box \varphi, \minor{y:\varphi}$}
        \RightLabel{($y,z$ fresh)}
        \LeftLabel{$\Box$r}
        \UnaryInfC{$\R, \Gamma \vdash \Delta, \principal{x:\Box \varphi}$}
    \end{prooftree}
    \begin{prooftree}
        \AxiomC{$\R, x\preceq y, \Gamma \vdash \Delta , x:\Diamond \varphi, \minor{y:\hat\Diamond \varphi}$}
        \LeftLabel{$\Diamond$r}
        \RightLabel{($y$ fresh)}
        \UnaryInfC{$\R, \Gamma \vdash \Delta, \principal{x:\Diamond \varphi}$}
    \end{prooftree}
\end{multicols}
    \begin{prooftree}
        \AxiomC{$\R,\Gamma \vdash \Delta , x:\hat\Diamond \varphi, \{\minor{y:\varphi} \mid xRy\}$}
        \LeftLabel{$\hat\Diamond$r}
        \UnaryInfC{$\R, \Gamma \vdash \Delta, \principal{x:\hat\Diamond \varphi}$}
    \end{prooftree}
    
\begin{multicols}{2}
    \begin{prooftree}
        \AxiomC{$\R, \Gamma, x:\eta X.\varphi_X, \minor{x:\varphi_X} \vdash \Delta$}
        \LeftLabel{$\eta$l}
        \UnaryInfC{$\R, \Gamma, \principal{x:\eta X.\varphi_X} \vdash \Delta$}
    \end{prooftree}
    \begin{prooftree}
        \AxiomC{$\R, \Gamma, x:X, \minor{x:\varphi_X} \vdash \Delta$}
        \LeftLabel{$\mathsf{regen}$-l}
        \UnaryInfC{$\R, \Gamma, \principal{x:X} \vdash \Delta$}
    \end{prooftree}
    
    
    \begin{prooftree}
        \AxiomC{$\R, \Gamma \vdash \Delta, x:\eta X.\varphi_X, \minor{x:\varphi_X}$}
        \LeftLabel{$\eta$r}
        \UnaryInfC{$\R, \Gamma \vdash \Delta, \principal{x:\eta X.\varphi}$}
    \end{prooftree}
    \begin{prooftree}
        \AxiomC{$\R, \Gamma \vdash \Delta, x:X, \minor{x:\varphi_X}$}
        \LeftLabel{$\mathsf{regen}$-r}
        \UnaryInfC{$\R, \Gamma \vdash \Delta, \principal{x:X}$}
    \end{prooftree}
\end{multicols}
}
\end{enumerate}
where in the rules $\to$r, $\Box$r, $\Diamond$l, and $\Diamond$r, we require that the \emph{eigenvariables} $y$ and $z$ do not occur in the conclusions.
Given any inference rule, except $\preceq$-pres and $\preceq$-trans, acts on \emph{minor formulas} in the premise(s) and a \emph{principal formula} in the conclusion, these are the \colorbox{myblue}{highlighted} formulas in the definition above.
\end{definition}

\begin{remark}
    We comment on the differences between the fully-labeled proof systems for intuitionistic modal logic in \cite{marin2021fullylabeled} and our.
    Since we allow fallible worlds, in $\bot$l, we require that $x:\bot$ appears on the left and $x:\varphi$ appears on the right for some $\varphi$.
    Since constructive diamonds have a universal-existential definition, we also modify the rules for diamonds.
    $\Diamond$r generates a formula using a local diamond $\hat\Diamond$ on the right; the original rule for diamonds on the right is now used for $\hat\Diamond$.
    We do not modify $\Diamond$r, since the universal part of diamonds is already covered by $\preceq$-pres.
\end{remark}

If $T$ be a pre-proof, we denote by $[T]$ the set of infinite branches of a tree.
A \emph{trace} along a path $\{S_i\}_{i<\omega}\in[T]$ is a sequence of labeled formulas $\{x_i:\varphi_i\}_{i<\omega}$ such that either: $x_i:\varphi_i$ and $x_{i+1}:\varphi_{i+1}$ are the same; or $x_i:\varphi_i$ is the principal formula and $x_{i+1}:\varphi_{i+1}$ is a minor formula of the inference in $T$ where $S_i$ is the conclusion and $S_{i+1}$ is a premise.

We say the formula $\eta X.\psi_X$ is \emph{left-regenerated infinitely often} in $\rho$ if there is an infinite subset $I\subseteq \omega$ such that $x_i:X\in \Gamma_n$ and $x_{i+1}:\eta X.\varphi\in \Gamma_{n+1}$ for all $i\in I$.
Similarly, a formula $\eta X.\psi_X$ is \emph{right-regenerated infinitely often} in $\rho$ if there is an infinite subset $I\subseteq \omega$ such that $x_i:X\in \Delta_n$ and $x_{i+1}:\eta X.\varphi\in \Delta_{n+1}$ for all $i\in I$.
We say $\eta X.\psi_X$ is \emph{regenerated infinitely often} in $\rho$ if it is right- or left-regenerated infinitely often.

Let $\rho$ be a trace and $\eta X.\psi_X$ be the outermost infinitely often regenerated formula in $\rho$, that is, is $\eta X'.\psi'_{X'}$ is regenerated infinitely often in $\rho$ then $\eta X'.\psi'_{X'} \in \eta X.\psi_X$.
A trace is \emph{progressing} if, either $\eta = \nu$ and $\eta X.\psi_X$ is right-regenerated infinitely often, or $\eta = \mu$ and $\eta X.\psi_X$ is left-regenerated infinitely often.

\begin{definition}
    A pre-proof $T$ is a \emph{$\muck$-proof} iff all infinite path $\vec{S}\in [T]$ has a progressing trace. 
    A well-named labeled sentence $x:\varphi$ is \emph{$\muck$-provable} when there is a $\muck$-proof whose root is $x:\varphi$.
    We write $\muck\vdash\varphi$ when there is $x$ such that $x:\varphi$ is \emph{$\muck$-provable}.
\end{definition}

The following theorem will follow from Lemmas~\ref{lem::soundness} and~\ref{lem::completeness} below:
\begin{theorem}
    \label{thm::sound-and-complete}
    $\muck$ is sound and (weakly) complete with respect to $\ck$-models.
    That is, for all well-named sentence $\varphi$, $\muck\vdash\varphi$ if and only if $\ck\models\varphi$.
\end{theorem}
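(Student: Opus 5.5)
The statement splits into soundness and weak completeness, and I prove the two directions separately, in both cases using the game semantics of Theorem~\ref{thm::correctness-game-semantics}.

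\emph{Soundness.} I argue by contraposition. Suppose $\ck\not\models\varphi$, so some $\ck$-model $M$ and world $w$ satisfy $M,w\not\models\varphi$. By Theorem~\ref{thm::correctness-game-semantics}, $\mathsf{II}$ has a winning strategy $\tau$ in $\mathcal{G}(M,w\models\varphi)$. Now assume, for contradiction, that $T$ is a $\muck$-proof of $x:\varphi$.

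I build a branch of $T$ together with an interpretation $f:\ell(S)\to W$ of world variables, starting from $f(x)=w$. The invariant at each sequent $\R,\Gamma\vdash\Delta$ on the branch is:
\begin{itemize}
\item $f$ respects $\R$;
\item every $y:\psi\in\Gamma$ is true at $f(y)$ (in the model augmented with the intended fixed-point valuations);
\item every $y:\psi\in\Delta$ is false at $f(y)$.
\end{itemize}
For the $\hat\Diamond$ and $?$ auxiliary forms, "true" and "false" are read in the obvious way.

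Each rule preserves the invariant in some premise. For the eigenvariable rules $\to$r, $\Box$r, $\Diamond$r and $\Diamond$l, the witness world is supplied by the failure or truth of the principal formula, and equivalently by the refuting player's moves in the game. For the invertible rules with universal side conditions, the invariant holds in every premise.

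Axioms are never reached. At $\mathsf{id}$ we would have $f(x)\in V(P)$ and $f(x)\notin V(P)$ at once. At $\bot$l we would have $f(x)\in W^\bot$, but $W^\bot\subseteq V(P)$ and $W^\bot$ is closed under the relations, so every formula holds at $f(x)$, contradicting falsity on the right.

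The resulting branch is therefore infinite, so it carries a progressing trace. I follow that trace while tracking the signatures of Appendix~\ref{appendix::proof-C}. Left formulas are true, so their $\mu$-signatures are well-defined; right formulas are false, so their $\nu$-signatures are well-defined. When choosing witnesses, I choose them so that these signatures never increase and strictly decrease whenever the outermost fixed-point variable is regenerated on the relevant side. This is exactly how $\tau$ plays. A progressing trace then yields an infinite strictly descending sequence of ordinal tuples, which is a contradiction.

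I expect the main obstacle to be keeping signatures non-increasing through $\preceq$-pres and the implication rules. Persistence, meaning truth is upward closed, must hold for signature-annotated approximants, and I would first prove $\|\mu X^\alpha.\psi\|$ upward closed by induction on $\alpha$.

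\emph{Completeness.} I again argue by contraposition. Suppose $x:\varphi$ has no proof. I run a fair proof-search procedure: a tree applying every applicable rule to every formula infinitely often, in round-robin fashion, keeping all formulas thanks to the cumulative rule format. Since it is not a proof, the proof-search tree has a branch that is either finite and non-axiomatic or infinite without a progressing trace; I view this as a game in which the prover picks rules and the refuter picks premises. I then extract a countermodel from that branch:
\begin{itemize}
\item $W$ is the set of variables on the branch;
\item $\preceq$ is the reflexive-transitive closure of the $\preceq$-statements;
\item $R$ is given by the $R$-statements;
\item $V(P)=\{y : y:P \text{ occurs on the left}\}$;
\item $W^\bot=\{y : y:\bot \text{ occurs on the left}\}$.
\end{itemize}
The $\bot$l axiom being avoided means $W^\bot$ worlds carry no right formulas. $\preceq$-pres gives monotonicity of $V$, and the $\preceq$ and $R$ rules give the closure of $W^\bot$.

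Finally, I show that player $\mathsf{II}$ wins $\mathcal{G}(M,x\models\varphi)$, using as strategy the pairing of positions with labeled formulas on the branch: a position $\tuple{y,\psi,\verifier}$ with $\mathsf{I}$ as verifier corresponds to $y:\psi$ on the right, and a role swap moves the formula to the left. Fairness guarantees that every move available to $\mathsf{I}$ corresponds to a formula the branch has already produced. Infinite plays then correspond to traces, and since no trace is progressing, the outermost regenerated fixed point is owned by $\mathsf{II}$. By Theorem~\ref{thm::correctness-game-semantics}, $M,x\not\models\varphi$.
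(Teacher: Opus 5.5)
\paragraph{Overall.} Your architecture is the paper's. For soundness you steer a branch of a putative proof by a countermodel; your truth/falsity-plus-signature invariant is a semantic unfolding of the paper's $\tau$-refutations, and it handles $\preceq$-pres via persistence more carefully than the paper does. For completeness you read a countermodel off an open branch of a fair search and let $\mathsf{II}$ play along it. There is, however, a genuine gap, and it sits in the step the paper also calls straightforward: showing that the extracted structure is a $\ck$-model.

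\paragraph{The $R$-closure of $W^\bot$.} You say the $\preceq$ and $R$ rules give the closure of $W^\bot$. But no rule of $\muck$ moves $x{:}\bot$ along $R$; only $\preceq$-pres moves it along $\preceq$. So $W^\bot$ need not be $R$-closed. Separately, your $V(P)$ need not contain $W^\bot$, but that part is fixable by adding $W^\bot$ to every $V(P)$, since an open branch puts no atom on the right at a fallible label.

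The $R$-closure failure cannot be repaired by a better choice of $M$. The sentence $\Diamond(\bot\land\Diamond(P\to\mu X.\Diamond X))\to\bot$ is valid over $\ck$-models as defined. A world satisfying the antecedent has an $R$-successor $u\in W^\bot$, which has an $R$-successor $t$ satisfying $P\to\mu X.\Diamond X$. $R$-closure gives $t\in W^\bot\subseteq V(P)$, hence $t\models\mu X.\Diamond X$. But $\|\mu X.\Diamond X\|=\emptyset$ in every model, so the antecedent is never satisfied.

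Yet the sentence is unprovable. In any pre-proof of it, always take the premise with $w{:}P$ on the right at $\to$l. On the resulting branch, every label with $\bot$ on the left carries no right formula, $P$ never occurs on the left, and no fixed point is ever unfolded. So the branch reaches no axiom and has no progressing trace. Hence completeness fails for $\muck$ as defined. What is missing is a change to the calculus (for example, a rule propagating $x{:}\bot$ to $R$-successors) or to the semantics, not a better proof.

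\paragraph{Inconsistent readings of $\bot$l.} Relatedly, your two halves read $\bot$l differently, and neither reading supports both.
\begin{itemize}
\item In soundness you claim every formula holds at a fallible world. In these models that is false: $\Diamond\top$ fails at a fallible world with no $R$-successor, and $\mu X.\Diamond X$ fails everywhere. The step is correct only for the displayed axiom, whose right formula is an atom; there $W^\bot\subseteq V(P)$ suffices.
\item With that atomic axiom, avoiding $\bot$l only keeps right atoms away from fallible labels, not all right formulas as your completeness argument assumes. Indeed $y{:}\bot\vdash y{:}\bot$ is then not an axiom, and even $\top=\bot\to\bot$ is unprovable.
\item With the liberal reading of the Remark (any $x{:}\varphi$ on the right), completeness gets what you use, but soundness breaks. $\bot\to\Diamond\top$ becomes provable, yet it is refuted by a single fallible world with empty $R$.
\end{itemize}
You need one consistent treatment of fallible worlds before either argument, yours or the paper's, goes through. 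One option is a semantics in which fallible worlds force every formula, together with $R$-propagation of $\bot$ in the calculus.
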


\subsection{Soundness}
\label{subsec::soundness}
To prove the soundness of $\muck$, we show, if $x:\varphi$ has a refutation, then this refutation witnesses that any pre-proof whose root is $x:\varphi$ is not a proof.
We first make the definition of refutation precise:
\begin{definition}
    Let $\varphi$ be a sentence, $S = \R,\Gamma\vdash\Delta$ be a sequent where all formulas are subformulas of $\varphi$, $M = \tuple{W, W^\bot, \preceq, R, V}$ be a $\ck$-model, and $\tau$ be a strategy for $\mathsf{II}$ on $\mathcal{G}(M,w\models\varphi)$.
    A function $r:\ell(S) \to W$ is a \emph{$\tau$-refutation} of $S$ on $M$ iff
    \begin{itemize}
        \item for all $x,y\in \ell(S)$, $xRy\in \R$ implies $r(x)Rr(y)$ and $x\preceq y\in \R$ implies $r(x)\preceq r(y)$;
        \item if $x:\psi\in\Gamma$, then $\tau$ is a winning strategy for $\mathsf{II}$ at $\tuple{r(x), \psi, \refuter}$; and
        \item if $x:\psi\in\Delta$, then $\tau$ is a winning strategy for $\mathsf{II}$ at $\tuple{r(x), \psi, \verifier}$.
    \end{itemize}
\end{definition}

\begin{lemma}
    \label{lem::soundness}
    For all sentence $\varphi$, $\muck\vdash\varphi$ implies $\ck\models\varphi$.
\end{lemma}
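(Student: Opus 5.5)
We argue by contraposition: assume $\ck\not\models\varphi$ and show that no pre-proof with root $\vdash x:\varphi$ is a $\muck$-proof. Fix a $\ck$-model $M$ and a world $w$ with $M,w\not\models\varphi$. By Theorem~\ref{thm::correctness-game-semantics}, $\mathsf{II}$ has a winning strategy $\tau$ for $\mathcal{G}(M,w\models\varphi)$, which we take to be positional. We also need $\tau$ to be winning from every position in the sequent that it is claimed to win, not only from the root. So we pick $\tau$ to be winning from every position in $\mathsf{II}$'s winning region, for subformulas of $\varphi$ evaluated at arbitrary worlds; positional determinacy from the proof of Theorem~\ref{thm::correctness-game-semantics} gives such a uniform $\tau$. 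Then $r(x):=w$ is a $\tau$-refutation of the root sequent $\vdash x:\varphi$.

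The core step is a local lemma: if $r$ is a $\tau$-refutation of the conclusion of a rule instance, then some premise has a $\tau$-refutation $r'\supseteq r$, and every trace step from a principal formula to a minor formula in that premise corresponds to a $\tau$-consistent move in the game. We check this rule by rule.
\begin{itemize}
\item Axioms cannot be refuted. In $\mathsf{id}$, $\mathsf{II}$ would have to win both $\tuple{r(x),P,\refuter}$ and $\tuple{r(x),P,\verifier}$, which is impossible. In $\bot$l, $\mathsf{II}$ winning $\tuple{r(x),\bot,\refuter}$ forces $r(x)\in W^\bot$, and then $\tuple{r(x),P,\verifier}$ is lost for $\mathsf{II}$ because $W^\bot\subseteq V(P)$.
\item For $\preceq$-pres we need monotonicity: if $\mathsf{II}$ wins $\tuple{v,\psi,\refuter}$ then $v\models\psi$ by Theorem~\ref{thm::correctness-game-semantics}, so $v'\models\psi$ for every $v\preceq v'$ by persistence (a routine induction), hence $\mathsf{II}$ wins at $v'$. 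The rule $\preceq$-trans is immediate from transitivity of $\preceq$.
\item Branching rules where $\mathsf{II}$ chooses: we follow $\tau$. For example, in $\land$r $\mathsf{II}$ is $\refuter$ at $\tuple{r(x),\varphi\land\psi,\verifier}$, and $\tau$ selects a conjunct, which picks the premise.
\item Rules where $\mathsf{II}$ must answer all moves: the minor formulas are exactly the possible opponent moves, so all of them remain winning. Examples are $\lor$r, $\Box$l, $\hat\Diamond$r, and the fixed-point and regeneration rules.
\item Eigenvariable rules: we extend $r$ using $\tau$'s choice. For $\to$r, $\tau$ picks $u\succeq r(x)$ and then, at $\tuple{u,\varphi?\psi,\verifier}$, $\mathsf{II}$ wins both successors, so we set $r'(y)=u$. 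For $\Box$r, $\tau$ picks $u\preceq v R u'$, and we set $y\mapsto v$, $z\mapsto u'$. For $\Diamond$r we set $y\mapsto\tau$'s choice. For $\Diamond$l, $\mathsf{II}$ is $\verifier$ and $\tau$ chooses the successor.
\item The role swap in $\to$l is handled by $\tau$'s choice at the $?$ position, which selects which premise to take.
\end{itemize}

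Iterating the local lemma from the root produces a branch of any given pre-proof, each of whose sequents carries a refutation. If the branch is finite, it ends in an axiom, contradicting the first item above; hence the branch is infinite. It remains to show that this branch has no progressing trace, and we expect this to be the main obstacle. Take any trace $\{x_i:\varphi_i\}$ along the branch. Deleting the stuttering steps, the images $\tuple{r(x_i),\varphi_i,\role_i}$ form an infinite play consistent with $\tau$, with role $\refuter$ for formulas on the left and $\verifier$ for formulas on the right. The $\preceq$-pres steps need care here, since they are not game moves. We handle them by noting that the trace never passes through them as a principal step, because that rule has no principal formula, so they only add new traced formulas. The side-switch in the $\to$ rules matches the role switch at $?$ positions.

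Under this correspondence, left-regeneration of $\mu X$ means a $\mu$-formula regenerated with $\mathsf{I}$ in role $\refuter$, and right-regeneration of $\nu X$ means a $\nu$-formula regenerated with $\mathsf{I}$ in role $\verifier$. Both are owned by $\mathsf{I}$. A progressing trace would therefore be a $\tau$-play won by $\mathsf{I}$, contradicting that $\tau$ is winning. One further check is that the traced play starts from a winning position and not from the game's initial position. This holds because the winning conditions depend only on the tail of the play and $\tau$ wins uniformly, so the contradiction goes through. We conclude that the pre-proof is not a $\muck$-proof.
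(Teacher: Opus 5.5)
Your proposal is correct and follows essentially the same route as the paper: fix a countermodel and a winning strategy $\tau$ for $\mathsf{II}$, climb the pre-proof along a branch selected by $\tau$ while extending a $\tau$-refutation to the eigenvariables, and observe that every trace on that branch is a $\tau$-consistent play, so no trace can be progressing. Your extra steps fill in details the paper's proof leaves implicit and do not change the argument: you take $\tau$ uniform on $\mathsf{II}$'s winning region, use persistence to justify the $\preceq$-pres step, show that axioms cannot be refuted, and handle traces that start mid-branch.
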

\begin{proof}
    Let $x:\varphi$ be a labeled formula and $T$ be a pre-$\muck$-proof of $x:\varphi$.
    Suppose that $M = \tuple{W, W^\bot, \preceq,R,V}$ is a $\ck$ model such that $M,w\not\models\varphi$ for some $w\in W$.
    By Theorem \ref{thm::correctness-game-semantics}, $\mathsf{II}$ wins $\mathcal{G}(M,w\models\varphi)$ via a strategy $\tau$.

    We build a path $\tuple{S_i}_{i\in\omega}$ and a sequence of function $\tuple{r_i}_{i\in\omega}$ where, for all $i\in\omega$, $r_i\subseteq r_{i+1}$ and $r_i$ is a $\tau$-refutation of $S_i$ on $M$.

    Let $S_0$ be $\vdash x:\varphi$ and $r_0(x):= w$. It is immediate that $r_0$ is a refutation of $S_0$.

    Now, suppose $S_i$ and $r_i$ are defined.
    As $r_i$ is a $\tau$-refutation of $S_i$, we can use $\tau$ to pick a premise $S_{i+1}$ of the inference rule in $T$ whose conclusion is $S$.
    Depending on the inference rule, we use $\tau$ to extend $r_i$ into a $\tau$-refutation $r_{i+1}$ defined on the fresh variables in $S_{i+1}$ (or set $r_{i+1} := r_i$ if there is no such variable).
    Due to limited space, we describe only a few representative cases:
    \begin{itemize}
        \item Suppose $S_i$ the conclusion of a $\preceq$-pres, $\preceq$-trans, $\land$l, $\lor$r, $\Box$l, $\hat\Diamond$r, $\eta$r, $\eta$l, regen-r, or regen-l inference rule in $T$. We define $S_{i+1}$ as the only premise of this inference rule and define $r_{i+1} := r_i$.

        \item Suppose $S_i$ the conclusion of an $\to$l inference rule in $T$ whose principal formula is $y:\psi\to\chi$.
        We have that $\tau$ is a winning strategy at $\tuple{e(y), \psi\to\chi, \refuter}$, and so it is also a winning strategy at $\tuple{e(y), \psi?\chi, \refuter}$.
        If $\tau(\tuple{e(y), \psi?\chi, \refuter}) = \tuple{e(y), \theta, \refuter}$, we define $S_{i+1}$ to be the premise whose minor formula is $y:\theta$.
        In either case, define $r_{i+1}:= r_i$.
        A similar argument holds when $S_i$ is the conclusion of a $\lor$l and $\land$r inference rule.
        
        \item Suppose $S_i$ the conclusion of an $\to$r inference rule in $T$ whose principal formula is $y:\psi\to\chi$.
        Again, we define $S_{i+1}$ as the only premise of this inference rule.
        Let is the fresh variable in $S_{i+1}$, 
        If $x\in \mathrm{Var}(S_i)$, set $r_{i+1}(z) = r_i(z)$.
        If $z$ is the fresh variable in $\mathrm{Var}(S_{i+1})\setminus\mathrm{Var}(S_{i+1})$, set $r_{i+1}(z) = v$, where $v\in W$ is such that $\tau(r(y),\psi\to\chi,\verifier) = \tuple{v,\psi?\chi,\verifier}$.
        As $r_i$ is a $\tau$-refutation of $S_i$, so $r_{i+1}$ is a $\tau$-refutation of $S_{i+1}$.
        A similar argument holds when $S_i$ is the conclusion of a $\Box$r, $\Diamond$l, or $\Diamond$r inference rule; note that, in the case of $\Box$r rule, the domain of $r_{i+1}$ will contain two elements not in the domain of $r_i$.
    \end{itemize}

    By the above construction, each trace in the path $\tuple{S_i}_{i\in\omega}$ is corresponds to a play in $\mathcal{G}(M,w\models\varphi)$ following $\tau$. 
    As $\tau$ is a winning strategy for $\mathsf{II}$, no trace in $\tuple{S_i}_{i\in\omega}$ is progressing.
    Therefore $T$ is not a $\muck$-proof.
\end{proof}

\subsection{Completeness}
\label{sec::completeness}

Before proving the completeness for $\muck$, we define a notion of saturation for sequents:
\begin{definition}
    We say the sequent $\R, \Gamma \vdash \Delta$ is \emph{saturated} iff the following conditions hold:
    \begin{enumerate}
        \item if $x\preceq y$ and $y\preceq z$, then $x\preceq z$;
        \item if $xRy$ and $yRz$, then $xRz$;
        \item if $x\preceq y$ and $x:\varphi\in\Gamma$, then $y:\varphi\in\Gamma$;
        \item if $x:\varphi\land\psi\in\Gamma$, then $x:\varphi\in\Gamma$ and $x:\psi\in\Gamma$;
        \item if $x:\varphi\land\psi\in\Delta$, then $x:\varphi\in\Delta$ or $x:\psi\in\Delta$;
        \item if $x:\varphi\lor\psi\in\Gamma$, then $x:\varphi\in\Gamma$ or $x:\psi\in\Gamma$;
        \item if $x:\varphi\lor\psi\in\Delta$, then $x:\varphi\in\Delta$ and $x:\psi\in\Delta$;
        \item if $x:\varphi\to\psi\in\Gamma$, then $x:\varphi\in\Delta$ or $x:\psi\in\Gamma$;
        \item if $x:\varphi\to\psi\in\Delta$, then there is a variable $y$ such that $y:\varphi\in\Gamma$ and $y:\psi\in\Delta$;
        \item if $x:\Box\varphi\in\Gamma$, then, for all variable $y$, $xRy$ implies $y:\varphi\in\Gamma$;
        \item if $x:\Box\varphi\in\Delta$, then there are variables $y,z$ such that $x\preceq y$, $y R z$, and $z:\varphi\in\Delta$;
        \item if $x:\Diamond\varphi\in\Gamma$, then there is a variable $y$ such that $xRy$ and $y:\varphi\in\Gamma$;
        \item if $x:\Diamond\varphi\in\Delta$, then there is a variable $y$ such that $xRy$ and $y:\hat\varphi\in\Delta$;
        \item if $x:\Diamond\hat\varphi\in\Delta$, then, for all variable $y$, $xRy$ implies $y:\varphi\in\Delta$;
        \item if $x:\eta X.\varphi_X\in\Gamma$, then $x:\varphi_X\in\Gamma$;
        \item if $x:\eta X.\varphi_X\in\Delta$, then $x:\varphi_X\in\Delta$;
        \item if $x:X\in\Gamma$ $x:\eta X.\varphi_X\in\Gamma$; and
        \item if $x:X\in\Delta$ $x:\eta X.\varphi_X\in\Delta$.
    \end{enumerate}
    Note that there cannot be any formula of the form $x:\hat\Diamond\varphi$ in $\Gamma$.
\end{definition}

To prove completeness, we do a proof search which will give us either a $\muck$-proof or will allow us to define a countermodel.
Note that all inference rules are \emph{invertible}, that is, the premises of each rule can be obtained from the conclusion by an application of the weakening rule:
\begin{prooftree}
    \AxiomC{$\R, \Gamma \vdash \Delta$}
    \LeftLabel{$\mathsf{wk}$}
    \UnaryInfC{$\R, \R', \Gamma,\Gamma' \vdash \Delta,\Delta'$}
\end{prooftree}
We do not include the weakening rule explicitly in our system since it is implicit in the other rules.
\begin{lemma}
    \label{lem::completeness}
    For all $\mu$-sentence $\varphi$, $\ck\models\varphi$ implies $\muck\vdash\varphi$.
\end{lemma}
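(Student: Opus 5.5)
We argue by contraposition: assuming $\muck\not\vdash\varphi$, we build a $\ck$-model $M$ and a world $w$ with $M,w\not\models\varphi$. The plan is a fair proof search from the root $\vdash x:\varphi$, followed by reading off a countermodel from a branch that has no progressing trace.

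\textbf{Step 1: the search tree.} We build a (possibly infinite) pre-proof-like tree $T$ by applying rules in a fair, round-robin order. Every pending obligation must eventually be treated on every branch: each formula occurrence is decomposed by its rule, and each of $\preceq$-pres, $\preceq$-trans, $\Box$l, and $\hat\Diamond$r is reapplied whenever new relational atoms appear. Because all rules keep their principal formula (they are invertible, with weakening implicit), applying rules in any order loses no provability. We stop a branch when it reaches an axiom. If every infinite branch of $T$ then carried a progressing trace, $T$ would be a $\muck$-proof. So we may fix a branch $\vec S$ that either ends in a non-axiomatic leaf, which cannot happen under fairness unless the leaf is saturated, or is infinite with no progressing trace. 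Let $\R^\ast,\Gamma^\ast\vdash\Delta^\ast$ be the union of the sequents along $\vec S$. Fairness makes this limit sequent saturated in the sense of the definition above, up to reflexivity of $\preceq$, which we add by hand. The $\bot$l and $\mathsf{id}$ axioms never fire, so no $y:P$ lies in both $\Gamma^\ast$ and $\Delta^\ast$, and if $y:\bot\in\Gamma^\ast$ then $\Delta^\ast$ contains no $y:P$.

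\textbf{Step 2: the model.} We take $W:=\ell(\R^\ast)$ together with the root label, let $\preceq$ be the reflexive closure of the $\preceq$-atoms (already transitive by saturation), and let $R$ be the $R$-atoms. We set $V(P):=\{y\mid y:P\in\Gamma^\ast\}\cup W^\bot$, where $W^\bot:=\{y\mid \exists z\succeq y\ldots\}$ is taken as the $\preceq;R$-upward closure of $\{y\mid y:\bot\in\Gamma^\ast\}$. Persistence of $V$ follows from $\preceq$-pres saturation. We must check that enlarging $W^\bot$ by this closure does not put any right-hand atom into $V$; if it does, we restrict to the labels actually occurring on the left with $\bot$ and argue that the $\bot$l rule blocks the remaining cases.

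\textbf{Step 3: the truth lemma via games.} Here we use Theorem~\ref{thm::correctness-game-semantics}. We define a strategy $\tau$ for $\mathsf{II}$ that maintains the invariant ``at position $\tuple{y,\psi,\verifier}$ we have $y:\psi\in\Delta^\ast$, and at $\tuple{y,\psi,\refuter}$ we have $y:\psi\in\Gamma^\ast$''. The choices $\tau$ makes are exactly those witnessed by the saturation clauses. For example, at a Refuter position $\tuple{y,\psi\to\chi,\verifier}$, $\mathsf{II}$ moves to the fresh $y'$ given by clause 9. At Verifier moves of $\mathsf{I}$, the universal clauses (10, 14, 3, 8) guarantee that the invariant is preserved whatever $\mathsf{I}$ chooses. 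Finite plays end at atoms, where the invariant and the non-axiom conditions make $\mathsf{I}$ the stuck player.

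Every infinite play consistent with $\tau$ traces out a trace along $\vec S$. This is the step needing care: the play visits labelled formulas in the order dictated by the game, whereas the branch is a sequence of sequents. We must show that a play projects onto a genuine trace, using the fact that each decomposition occurs somewhere later along the branch, and that role swaps match the left/right side of the sequent. Left regenerations of $\mu$ correspond to $\mathsf{I}$ playing $\refuter$ on a $\mu$-formula, which is a win for $\mathsf{I}$ in the game. Since $\vec S$ has no progressing trace, the outermost infinitely regenerated fixed point is not owned by $\mathsf{I}$, so $\mathsf{II}$ wins. Hence $\tau$ is winning at $\tuple{x,\varphi,\verifier}$, and Theorem~\ref{thm::correctness-game-semantics} gives $M,x\not\models\varphi$, contradicting $\ck\models\varphi$.

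I expect the main obstacle to be this correspondence between infinite plays and traces. The branch must be built so that the path a play follows through labelled formulas can be aligned with successive sequents; because the principal formulas persist, one can wait until the corresponding rule fires. The remaining cases are routine.
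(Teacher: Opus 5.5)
Your route is the paper's: fair saturation search, then an open saturated leaf or a branch without progressing trace, then the term model on its labels, then a strategy for $\mathsf{II}$ read off the saturation clauses. However, Step 2 is a genuine gap, and the fallback you sketch does not close it.

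A $\ck$-model needs $W^\bot$ closed under $R$ and contained in every $V(P)$. But $\bot$l only relates $\bot$ and a right-hand formula at the same label, and no rule moves $x:\bot$ along $xRy$. So an open branch can contain $y':\bot\in\Gamma^\ast$, $y'Rz\in\R^\ast$ and $z:P\in\Delta^\ast$. Typically $y'$ is the intermediate world of $\Box$r, which is created without right-hand formulas.

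\begin{itemize}
\item With $\bot$l read as displayed (an atom on the right, as in your Step 1), the valid formula $\bot\to\Box P$ already does this. Its search ends in the saturated, non-axiomatic sequent with $x\preceq y\preceq y'$, $y'Rz$, left $y:\bot, y':\bot$, right $x:\bot\to\Box P, y:\Box P, z:P$. No proof exists at all, so under that reading the lemma itself fails.
\item With the Remark's reading of $\bot$l (any formula on the right), searching $(D\lor\bot)\to\Box P$ still yields a saturated open leaf. There $y$ takes the disjunct $D$, while $y'$ re-decomposes the inherited $D\lor\bot$ and takes $\bot$.
\end{itemize}

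In such a leaf both of your options fail. Your $R$-closed $W^\bot$ gives $z\in W^\bot\subseteq V(P)$, so $\mathsf{I}$ wins $\tuple{z,P,\verifier}$ and the identity is not a $\tau$-refutation; the resulting $M$ in fact satisfies the formula. Your restricted $W^\bot=\{y\mid y:\bot\in\Gamma^\ast\}$ is not $R$-closed, so $M$ is not a $\ck$-model and $M,x\not\models\varphi$ contradicts nothing.

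Your claim that finite plays end at atoms where $\mathsf{I}$ is stuck is also incomplete. A play can end at $\tuple{y,\bot,\verifier}$ with $y\in W^\bot$, which $\mathsf{I}$ wins. With your reading of $\bot$l, the search for $\top=\bot\to\bot$ ends with $y:\bot$ on both sides.

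What you need is that no label in the $\preceq$/$R$-upward closure of the $\bot$-labels carries a right-hand atom or $\bot$. You can get this in one of two ways:
\begin{itemize}
\item Enrich the calculus with a left rule propagating $x:\bot$ along $xRy$ (sound, because $W^\bot$ is $R$-closed), plus an axiom with $x:\bot$ on both sides.
\item Prove that a failed search always has an open branch avoiding the bad configuration, and pick that branch. In the second example this is the branch where $y'$ repeats $y$'s choice.
\end{itemize}
The paper's proof does not supply this either; at exactly this point it just asserts that $M$ is a $\ck$-model.

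Separately, the projection of plays onto traces in Step 3 does not go through with the paper's definition of trace. $\mathsf{I}$'s moves along $\preceq$ at left-hand $\to$, $\Box$, $\Diamond$ positions are realised on the branch only by $\preceq$-pres. $\mathsf{I}$'s choice of the antecedent after $\to$r lands on the left formula $y:\varphi$ of that rule. Neither is a principal-to-minor step in that definition, so ``waiting until the rule fires'' yields a trace only once these steps are admitted as trace steps.
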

\begin{proof}
    Fix a labeled sentence $x_0:\varphi_0$.
    We build a pre-proof $T$ by defining a sequence of finite trees $\{T_i\}_{i\in\omega}$ with $T_i\subseteq T_{i+1}$ for all $i\in\omega$.
    Let $T_0$ be a tree whose only node is its root, $x_0:\varphi_0$.

    If $T_i$ is defined, we define $T_{i+1}$ extends as follows.
    Let $S = \R,\Gamma\vdash\Delta$ be a leaf of $T_i$ which are not axioms nor saturated.
    If $y:\psi$ is a formula witnessing that $S$ is not saturated, we add above $S$ the premises of an inference rule whose principal formula is $x:\psi$.
    For example, if $y:\Box\chi$ witnesses $S$ is not saturated, we add above $S$ the sequent $\R,\Gamma,\{ z:\psi \mid yR z \}\vdash\Delta$.
    To ensure all formulas witnessing non-saturation are eventually acted upon, we give priority to formulas witnessing sequents below $S$ are non-saturated.
    Repeat this process for all leaves of $T_i$ which are not axioms nor saturated.
    
    Let $T = \bigcup_{i\in\omega}T_i$ be the union of all $T_i$.
    By construction, $T$ is a pre-$\muck$-proof.
    If $T$ is also a $\muck$-proof, we are done.
    Otherwise, we will extract a countermodel $M$ from $T$.

    There are two possibilities we need to consider.
    If $T$ has a saturated leaf which is not an axiom, we let $S = \R,\Gamma\vdash\Delta$ be such a leaf.
    Otherwise, let $\tuple{S_i}_{i\in\omega}$ be a path of $T$ with no progressing trace; in this case, let $S$ be $\bigcup_{i\in\omega}S_i$.
    That is, $S$ is $\R,\Gamma\vdash\Gamma = \bigcup_{i\in\omega}\R_i, \bigcup_{i\in\omega}\Gamma_i \vdash \bigcup_{i\in\omega}\Delta_i$, where $S_i = \R_i,\Gamma_i\vdash\Delta_i$.
    Note that, by construction, $S_i$ is saturated.\footnote{Remember that we officially consider only finite sequents; this is a harmless abuse of notation.}
    We now define $M = \tuple{W, W^\bot, \preceq,R, V}$ as follows:
    \begin{itemize}
        \item $W := \ell(S)$;
        \item $W^\bot := \{ x\in W \mid x:\bot \in \Gamma \}$;
        \item $\preceq := \{\tuple{x,y} \in W^2 \mid x\preceq y\in \R\}$;
        \item $R := \{\tuple{x,y} \in W^2 \mid x R y\in \R\}$; and
        \item $V(P) := \{ x\in W \mid x:P\in \Gamma \}$.
    \end{itemize}
    It is straightforward to show that $M$ is a $\ck$-model.

    We define a strategy $\tau$ for $\mathsf{II}$ in $\mathcal{G}(M,x\models\varphi)$ in a way such that $r(x) := x$ is a $\tau$-refutation of $S$ on $M$.
    At an position owned by $\mathsf{II}$, $\tau$ will move to a position respecting the saturation rules.
    For example, if $y:\Box\psi\in \Delta$, at a position of the form $\tuple{y,\Box\chi,\verifier}$, $\mathsf{II}$ moves to $\tuple{z, \chi, \verifier}$ where $x\preceq y$, $y R z$, and $z:\varphi\in\Delta$.
    Such moves will always exist for any formula by the saturation of $S$.
    
    If $r$ is not a $\tau$-refutation of $S$ on $M$, we get a winning run $\rho$ for $\mathsf{I}$ from either a node $\tuple{y,\psi,\verifier}$ with $y:\psi\in\Delta$ or $\tuple{y,\psi,\refuter}$ with $y:\psi\in\Delta$.
    If $\rho$ is finite and its last position is $\tuple{y,P,\role}$, we get that $y:P\in \Gamma\cap\Delta$.
    Depending on the construction of $S$, either $S$ is an axiom in $T$, or $S$ is the union of sequents $\{S_i\}_{i\in\omega}$ where some $S_i$ is an axiom.
    In both cases, we get a contradiction.
    If $\rho$ is infinite, $S$ must be the union of a sequence of sequents $\tuple{S_i}_{i\in\omega}$, as the well-naming requirement on formulas require progressing traces to have infinitely many variables.
    Here, we can extract a progressing trace in $\tuple{S_i}_{i\in\omega}$, a contradiction.

    As $r$ is a $\tau$-refutation of $S$ on $M$, we get $M,x_0\not\models\varphi_0$, that is, $\varphi_0$ is not valid.
\end{proof}

\section{\texorpdfstring{$\muik$}{muIK} and \texorpdfstring{$\mugk$}{muGK}}
\label{sec::extensions-of-ck}
To illustrate the flexibility of our methods, we define fully-labeled non-wellfounded proof systems $\muik$ and $\mugk$ for the $\mu$-calculus based on models of the logics $\mathsf{IK}$ and $\mathsf{GK}$, respectively.
For more details on $\mathsf{IK}$ refer to \cite{das2023diamonds,simpson1994proof}; for more details on $\mathsf{GK}$, refer to \cite{caicedo2010godelmodallogic}.

We begin by defining $\mathsf{IK}$- and $\mathsf{GK}$-models based on $\ck$-models.
\begin{definition}
    Let $M = \tuple{W, W^\bot, \preceq, R, V}$ be a $\ck$-model.
    We say $M$ is an \emph{$\mathsf{IK}$-model} iff:
    \begin{itemize}
        \item $W^\bot = \emptyset$;
        \item $M$ is forward confluent: if $w \preceq w'$ and $w R v$, then there is $v'$ such that $w' R v'$ and $v\preceq v'$;
        \item $M$ is backward confluent: if $w R v \preceq v'$, then there is $w'$ such that $w\preceq w' R v'$.
    \end{itemize}
    We say an $\mathsf{IK}$-model is a \emph{$\mathsf{GK}$-model} iff it is locally linear: if $w\preceq v$ and $w\preceq u$, then either $v\preceq u$ or $u\preceq v$.
    We write $\mathsf{IK}\models\varphi$ to mean that, for all $\mathsf{IK}$-model $M = \tuple{W, \emptyset, \preceq, R, V}$ and all $w\in W$, we have $M,w\models\varphi$.
    We define $\mathsf{GK}\models\varphi$ similarly.
\end{definition}

The following inference rules will be added to $\muck$ to define the $\muik$ and $\mugk$:
{\footnotesize
\begin{prooftree}
    \AxiomC{}
    \LeftLabel{$\mathsf{id}_{\Diamond \mathrm{N}}$}
    \UnaryInfC{$\R, \Gamma, x:\bot \vdash \Delta$}
\end{prooftree}
\begin{multicols}{2}
\begin{prooftree}
    \AxiomC{$\R, x\preceq x', x R y, y\preceq y', x' R y', \Gamma \vdash \Delta$}
    \LeftLabel{$\mathsf{fwd}$}
    \RightLabel{($y'$ fresh)}
    \UnaryInfC{$\R, x\preceq x', x R y, \Gamma, \vdash \Delta$}
\end{prooftree}
\begin{prooftree}
    \AxiomC{$\R, x R y, y \preceq y', x\preceq x', x' R y',\Gamma \vdash \Delta$}
    \LeftLabel{$\mathsf{bwd}$}
    \RightLabel{($x'$ fresh)}
    \UnaryInfC{$\R, x R y, y \preceq y',\Gamma, \vdash \Delta$}
\end{prooftree}    
\end{multicols}
\begin{prooftree}
    \AxiomC{$\R, x\preceq y, x\preceq z, y\preceq z, \Gamma \vdash \Delta$}
    \AxiomC{$\R, x\preceq y, x\preceq z, z\preceq y, \Gamma \vdash \Delta$}
    \LeftLabel{$\mathsf{lin}$}
    \BinaryInfC{$\R, x\preceq y, x\preceq z, \Gamma, \vdash \Delta$}
\end{prooftree}
}
These inferences correspond to the additional constraints on $\mathsf{IK}$- and $\mathsf{GK}$-models.

\begin{definition}
    \emph{$\muik$-proofs} defined as $\muck$-proofs, but we allow inferences rules $\mathsf{id}_{\Diamond \mathrm{N}}$, $\mathsf{fwd}$, and $\mathsf{bwd}$ to be used in addition to the inferences of $\muck$.
    Similarly, \emph{$\mugk$-proofs} are defined as $\muik$-proofs, but we further allow the inference rule $\mathsf{lin}$ to be used.
\end{definition}

As we proved Theorem \ref{thm::sound-and-complete}, we can prove:
\begin{theorem}
    $\muik$ and $\mugk$ are sound and (weakly) complete with respect to $\mathsf{IK}$- and $\mathsf{GK}$-models, respectively.
\end{theorem}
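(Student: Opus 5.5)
We follow the proof of Theorem~\ref{thm::sound-and-complete} and split into soundness and weak completeness for each of $\muik$ and $\mugk$. The only new ingredients are the rules $\mathsf{id}_{\Diamond\mathrm{N}}$, $\mathsf{fwd}$, $\mathsf{bwd}$ and $\mathsf{lin}$. Each of them either closes a branch or only adds relational atoms (possibly with a fresh world variable), and none has principal or minor labeled formulas. So traces, and hence the progress condition, are unaffected by these rules. Theorem~\ref{thm::correctness-game-semantics} is stated for all $\ck$-models, so it applies to $\mathsf{IK}$- and $\mathsf{GK}$-models, which are special $\ck$-models.

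For \textbf{soundness}, we repeat the proof of Lemma~\ref{lem::soundness}. Given an $\mathsf{IK}$-model (resp.\ $\mathsf{GK}$-model) $M$ with $M,w\not\models\varphi$, take a winning strategy $\tau$ for $\mathsf{II}$ and build a branch together with $\tau$-refutations $r_i$. For the new rules we argue as follows.
\begin{itemize}
\item $\mathsf{id}_{\Diamond\mathrm{N}}$ can never be reached. If $x:\bot\in\Gamma$, then $\tau$ would be winning at $\tuple{r(x),\bot,\refuter}$. But $W^\bot=\emptyset$, so at that position the Verifier role, held by $\mathsf{II}$, is stuck and loses.
\item For $\mathsf{fwd}$, the premise adds $yRy'$-type atoms with a fresh $y'$. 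Forward confluence of $M$ gives a world to which $r_{i+1}$ sends $y'$ so that the relational clauses are preserved. The labeled formulas do not change, so the game clauses still hold.
\item $\mathsf{bwd}$ is handled the same way, using backward confluence.
\item For $\mathsf{lin}$, local linearity of $M$ says that $r(y)\preceq r(z)$ or $r(z)\preceq r(y)$. We choose the premise that matches, and keep $r_{i+1}=r_i$.
\end{itemize}
The rest of the argument is verbatim: traces along the branch correspond to $\tau$-plays, so none is progressing.

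For \textbf{completeness}, we modify the proof search of Lemma~\ref{lem::completeness}. We add to the definition of saturation the following clauses:
\begin{itemize}
\item no $x:\bot\in\Gamma$, since otherwise $\mathsf{id}_{\Diamond\mathrm{N}}$ closes the sequent;
\item the forward and backward confluence closure conditions on $\R$;
\item for $\mugk$, the linearity condition: if $x\preceq y$ and $x\preceq z$, then $y\preceq z$ or $z\preceq y$ is in $\R$.
\end{itemize}
The search applies $\mathsf{fwd}$, $\mathsf{bwd}$ and $\mathsf{lin}$ whenever one of these conditions fails, again with fair scheduling. From a non-axiomatic saturated leaf, or from the union $S$ along a non-progressing branch, we read off $M$ exactly as before. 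By the new saturation clauses, $W^\bot=\emptyset$, $M$ is forward and backward confluent, and for $\mugk$ it is locally linear. The construction of the refutation strategy $\tau$ and the contradiction argument do not change, because they depend only on the logical saturation clauses and on traces.

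The main obstacle is \textbf{fairness} in the infinite union. The confluence rules create fresh worlds, and those worlds trigger new obligations (new $\preceq$-pres, $\Box$l and confluence instances) without end. We must check that every obligation created at a finite stage is discharged at some later stage of the branch, so that the limit sequent really is saturated. We would try a dovetailing priority queue: each obligation is tagged with the stage at which it appears, and the oldest pending obligation is always handled first. Each obligation then waits only finitely long. Since the eigenvariable conditions are respected, the union is a legitimate (infinite) sequent, and $M$ is an $\mathsf{IK}$-model (resp.\ $\mathsf{GK}$-model).
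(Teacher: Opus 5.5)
Your proposal is correct and follows the paper's route; the paper only says the result is proved as Theorem~\ref{thm::sound-and-complete} was, and your handling of the new rules is exactly that adaptation: confluence or linearity of $M$ extends the refutation in soundness, $W^\bot=\emptyset$ rules out $\mathsf{id}_{\Diamond\mathrm{N}}$, and extra saturation clauses with oldest-first scheduling drive the search. One step you assert without checking, the paper's claim that infinite plays need infinitely many worlds, still holds with $\mathsf{fwd}$, $\mathsf{bwd}$, $\mathsf{lin}$: every rule keeps $\preceq$ within a fixed $R$-depth and makes $R$ raise the depth by exactly one, so models read off finite saturated leaves have no $\preceq;R$-cycles.
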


We note that this argument can also be extended and adapted to other non-classical modal logics, although one must be careful when doing so.
For example, to define an analogous proof system over $\mathsf{CK4}$ models~\cite{balbiani2021constructive}, we need to add not only inference rules for the reflexivity and transivity of the modal relation, but also the rule $\mathsf{bwd}$. This happens because $\mathsf{CK4}$-models are \emph{backward confluent} $\ck$-models where the relation is reflexive and transitive.
Similar care must be taken with the logic $\mathsf{CS5}$, which coincides with $\mathsf{IS5}$~\cite{pacheco2024collapsing}; so we need to add $\mathrm{N}_\Diamond$, $\mathsf{fwd}$, and $\mathsf{bwd}$ in addition to rules for the reflexity, transitivity, and symmetry for the modal relation.

\section{Future Work}
\label{sec::future-work}
In future work, we plan to define a cyclic version of $\muck$ similar to the one in~\cite{pacheco2025cigl} and then adapt the loop checking methods of \cite{girlando2023intuitionistic,girlando2024ik} to obtain decidability results.

We close this paper with a(n infinite collection of) questions.
Let $\mathsf{H\mu CK}$ be Hilbert-style system obtained by adding fixed-point axioms and rules to the axioms of $\ck$.\footnote{See~\cite{das2023diamonds} for details.}
Do $\muck$ and $\mathsf{H\mu CK}$ prove the same theorems? While one direction follows from Theorem~\ref{thm::sound-and-complete}, the other is not so.
An analogous question can be asked for $\muik$, $\mugk$, and the $\mu$-calculus based on other non-classical modal logics.

\bibliographystyle{alphaurl}
\bibliography{intuitionistic-mu-calculus}

\appendix
\section{Proof of Proposition \ref{prop::monotoness_of_Gamma_varphi}}
\label{appendix::proof-A}
    Fix a model $M$ and sets $A\subseteq B \subseteq W$.
    We prove that if $X$ is positive in $\varphi$ then $\|\varphi(A)\|\subseteq \|\varphi(B)\|$; and that if $X$ is negative in $\varphi$, then $\|\varphi(B)\|\subseteq \|\varphi(A)\|$.
    We will focus on the positive case, as the negative case is dual.

    The proof is by structural induction on the $\mu$-formulas.
    The cases of formulas of the form $P$, $X$, $Y$, $\varphi\land\psi$, and $\varphi\lor\psi$ follow by direct calculations.
    The case for formulas of the form $\eta X.\varphi$ is trivial, as $X$ is not free in $\eta X.\varphi$.

    We now prove the proposition for formulas of the form $\varphi\to\psi$.
    Suppose $X$ is positive in $\varphi\to \psi$, then $X$ is positive in $\psi$ and negative in $\varphi$.
    Therefore:
    \begin{align*}
      w\in\|(\varphi\to\psi)(A)\| \iff& \text{for all $v$, if $w \preceq v$ and }v\in \|\varphi(A)\|\text{, then } v\in\|\psi(A)\| \\
      \implies& \text{for all $v$, if $w \preceq v$ and }v\in \|\varphi(B)\|\text{, then } v\in\|\psi(B)\| \\
      \iff&w\in  \|(\varphi\to\psi)(B)\|.
    \end{align*}
    The case for formulas of the form $\neg\varphi$ is similar.

    Finally, we prove the proposition for formulas of the form $\Box\varphi$.
    Suppose $X$ is positive in $\Box\varphi$, then $X$ is positive in $\varphi$.
    Therefore:
    \begin{align*}
        w\in\|\Box\varphi(A)\| \iff& \text{for all $v, u$ such that $w\preceq v R u$, }u\in \|\varphi(A)\| \\
        \implies& \text{for all $v, u$ such that $w\preceq v R u$, }u\in \|\varphi(B)\| \\
        \iff& w\in \|\Box\varphi(B)\|.
    \end{align*}
    The proof for formulas of the form $\Diamond\varphi$ is similar.

\section{Proof of Proposition \ref{prop::ownership-well-defined}}
\label{appendix::proof-B}
    If $\tuple{v,\eta X. \psi_X, \role}$ and $\tuple{v',\eta X. \psi_X, \role'}$ occur in the same run $\rho$, then $\role$ and $\role'$ coincide by the positivity of $X$ in $\psi_X$: it implies that the players must swap roles an even number of times between these two positions.

    Now, let $\tuple{v,\eta X. \psi_X, \role}$ and $\tuple{v',\eta X. \psi_X, \role'}$ be the first occurrence of positions with the formula $\eta X.\psi_X$ in $\rho$ and $\rho'$, respectively.
    The well-namedness of $\varphi$ implies that there is only one occurrence of $\eta X$ in $\varphi$.
    This fact along with the positivity of $X$ implies that the number of times the players switch roles to get to $\tuple{v,\eta X. \psi_X, \role}$ and $\tuple{v',\eta X. \psi_X, \role'}$ must have the same parity.

\section{Proof of Theorem \ref{thm::correctness-game-semantics}}
\label{appendix::proof-C}
    We prove that, if $M,w\models\varphi$, then $\mathsf{I}$ has a winning strategy for $\mathcal{G}(M,w\models\varphi)$ and, if $M,w\not\models\varphi$, then $\mathsf{II}$ has a winning strategy for $\mathcal{G}(M,w\models\varphi)$.
    This is sufficient to prove the theorem since the two players cannot both have a winning strategy for $\mathcal{G}(M,w\models\varphi)$ and since one of $M,w\models\varphi$ or $M,w\not\models\varphi$ always holds.

    Suppose $w\models \varphi$.
    We will assign to each main position $\tuple{v,\psi, \role}$ of the game an ordinal signature $\sig{v,\psi, \role}$.
    We show $\mathsf{I}$ is always able to control the truth of the positions in the evaluation game $\mathcal{G}(M,w\models\varphi)$ and move in a way that the signature is eventually constant.

    Before we define $\mathsf{I}$-signatures, we enumerate the fixed-point subformulas of $\varphi$ in non-increasing size:
    \[
        \eta_1 X_1.\psi_1, \eta_2 X_2.\psi_2, \dots, \eta_n X_n.\psi_n.
    \]
    That is, we enumerate the fixed-point subformulas of $\varphi$ in a way such that, if $i < j$, then $\eta_{i} X_{i}.\psi_{i}\not\in \mathrm{Sub}(\eta_j X_j.\psi_j)$; and, if $\eta_{i} X_{i}.\psi_{i}\in \mathrm{Sub}(\eta_j X_j.\psi_j)$, then $j \leq i$.
    We also enumerate the fixed-point subformulas of $\varphi$ which are owned by $\mathsf{I}$ in non-increasing size:
    \[
        \eta_1' Y_1.\chi_1, \eta_2' Y_2.\chi_2, \dots, \eta_m' Y_m.\chi_m.
    \]

    An \emph{$\mathsf{I}$-signature} $r = \tuple{r(1), \dots, r(m)}$ is a sequence of $m$ ordinals.
    Denote by $r(k)$ the $k$th component of $r$.
    Write $r =_k r'$ iff the first $k$ components of $r$ are identical.
    Order the signatures by the lexicographical order: $r<r'$ iff there is $k\in\{0,\dots, m-1\}$ such that $r =_k r'$ and $r(k+1)<r'(k+1)$.
    The lexicographical order is a well-ordering of the signatures.

    Remember that the \emph{augmented Kripke model} $M[X\mapsto A]$ is obtained by setting $V(X) := A$, where $M = \tuple{W, W^\bot,\preceq, R,V}$ is a Kripke model, $A\subseteq W$ and $X$ is a variable symbol.
    We want to evaluate subformulas of $\varphi$ where some $X_1,\dots, X_n$ occur free, so we augment $M$ with the correct valuations of these variables:
    \begin{align*}
        M_0     &:= V; \\
        M_{i} &:= M_{i-1}[X_{i}\mapsto \|\eta_{i} X_{i}.\psi_{i}\|^{M_i}]\text{, for $1\leq i \leq n$.}
    \end{align*}
    By the choice of our enumeration, $\eta_{i} X_{i}.\psi_{i}$ does not contain free occurrences of $X_{i},\dots, X_n$, and so $M_i$ is well-defined.

    The $\alpha$-th approximant $\mu X^\alpha.\varphi$ of $\mu X.\varphi$ is obtained by applying $\Gamma_\varphi(X)$ to $\emptyset$, $\alpha$ many times; and that the approximant $\nu X^\alpha.\varphi$ is obtained by applying $\Gamma_\varphi(X)$ to $W$, $\alpha$ many times.
    We define models $M_n^r$ where the variables $Y_j$ owned by $\mathsf{I}$ are assigned their $r(j)$-th approximant $\|\eta_j^{r(j)} Y_j.\chi_j\|$, and variables owned by $\mathsf{II}$ receive their correct value.
    Formally, given a signature $r$, we define augmented models $M^r_0, \dots, M^r_n$ by
    \begin{align*}
        M_0^r     &:= V; \\
        M_{i}^r &:= \left\{
        \begin{array}{ll}
            M_i[X_{i}\mapsto \|\eta_j' Y_j^{r(j)}.\chi_j\|^{M_i^r}], & \text{if $ X_{i} = Y_j$}; \\
            M_i[X_{i}\mapsto \|\eta_{i} X_{i}.\psi_{i}\|^{M_i}], & \text{if there is no $j$ such that $X_{i} = Y_j$}
        \end{array} \right. \text{, for $1\leq i \leq n$.}
    \end{align*}

    If $M_n,v\models\psi$, we call $\tuple{v,\psi, \role}$ a \emph{true position}; if $M_n,v \not\models\psi$, we call $\tuple{v,\psi, \role}$ a \emph{false position}.
    Now, if $\tuple{v,\psi, \role}$ a true position, then there is a least signature $r$ such that $M_n^r,v\models\psi$.
    Similarly, if $\tuple{v,\psi, \role}$ a false position, then there is a least signature $r$ such that $M_n^r,v\not\models\psi$.
    Denote these signatures by $\sig{v,\psi, \role}$.

    We will define a strategy $\sigma$ for $\mathsf{I}$ which guarantees that when the players are at $\tuple{v,\psi, \role}$, $v\models\psi$ if $\mathsf{I}$ is in the role of $\verifier$, and $v\not\models\psi$ if $\mathsf{I}$ is in the role of $\refuter$.
    Furthermore, $\mathsf{II}$ cannot move in ways the signature increases and $\mathsf{I}$'s moves are eventually non-increasing.
    The only time the signature may increase is when regenerating some fixed-poin formula $\eta'_jY_j.\chi_j$, but in this case the first $j-1$ positions of the signature are not modified.

    We will also have if a position is reachable when $\mathsf{I}$ follows the strategy $\sigma$, then it is a true positions when $\role = \verifier$ and a false position when $\role=\refuter$.
    Remember that the game starts on the position $\tuple{w,\varphi,\verifier}$ and we assumed that $M,w\models\varphi$ holds, so this is true for the initial position of the game.

    We define $\mathsf{I}$'s strategy as follows:
    \begin{itemize}
        \item Suppose the game is at the position $\tuple{v,\theta_1\lor\theta_2, \role}$.
        If $\role = \verifier$ and $\tuple{v,\theta_1\lor\theta_2, \verifier}$ is a true position; then $\mathsf{I}$ moves to $\tuple{v,\theta_i, \verifier}$ such that $M_n^\sig{v,\theta_1\lor\theta_2, \verifier}, v\models \theta_i$, with $i\in\{1,2\}$.
        By the definition of the signatures, $\sig{v,\theta_1\lor\theta_2, \verifier} = \sig{v,\theta_i, \verifier}$.
        If $\role = \refuter$ and $\tuple{v,\theta_1\lor\theta_2,\refuter}$ is a false position; then $M_n^\sig{v,\theta_1\lor\theta_2,\refuter}, v\not\models \theta_i$ and $\sig{v,\theta_1\lor\theta_2,\refuter} \geq \sig{v,\theta_i,\refuter}$ for all $i\in\{1,2\}$. So whichever way $\mathsf{II}$ moves, the next position is false and the signature is non-increasing.

        \item Suppose the game is at the position $\tuple{v,\theta_1\land\theta_2,\role}$.
        If $\role = \verifier$ and $\tuple{v,\theta_1\land\theta_2,\verifier}$ is a true position; then $M_n^\sig{v,\theta_1\land\theta_2,\verifier}, v\models \theta_i$ and $\sig{v,\theta_1\land\theta_2,\verifier} \geq \sig{v,\theta_i}$ for all $i\in\{1,2\}$.
        So whichever way $\mathsf{II}$ moves, the next position is true and the signature is non-increasing.
        If $\role = \refuter$ and $\tuple{v,\theta_1\land\theta_2,\refuter}$ is a false position; then $\mathsf{I}$ moves to $\tuple{v,\theta_i,\refuter}$ such that $M_n^\sig{v,\theta_1\land\theta_2,\refuter}, v\not\models \theta_i$ and $\sig{v,\theta_1\land\theta_2,\refuter} = \sig{v,\theta_i,\refuter}$, with $i\in\{1,2\}$.

        \item Suppose the game is at the position $\tuple{v,\Diamond\theta,\role}$.
        If $\role = \verifier$ and $\tuple{v,\Diamond\theta,\verifier}$ is a true position; for all move $\tuple{\tuple{v'},\theta,\verifier}$ of $\mathsf{II}$, $\mathsf{I}$ can move to some $\tuple{v'',\theta,\verifier}$ such that $M_n^\sig{v,\Diamond\theta,\verifier}, v''\models \theta$.
        By the definition of the signatures, $\sig{v,\Diamond\theta,\verifier} \geq \sig{v'',\theta,\verifier}$.
        If $\role = \refuter$ and $\tuple{v,\Diamond\theta,\refuter}$ is a false position; $\mathsf{I}$ moves to a position $\tuple{\tuple{v'},\theta,\refuter}$ such that all answers $\tuple{v'',\theta,\refuter}$ by $\mathsf{II}$ are false positions.
        Furthermore, $M_n^\sig{v,\Diamond\theta,\refuter}, v''\not\models \theta$ and $\sig{v,\Diamond\theta,\refuter} \geq \sig{v'',\theta,\refuter}$ for all such $v''$.

        \item Suppose the game is at the position $\tuple{v,\Box\theta,\role}$.
        If $\role = \verifier$ and $\tuple{v,\Box\theta,\verifier}$ is a true position; for all move $\tuple{v', \theta,\verifier}$ of $\mathsf{II}$, we have $M_n^\sig{v,\Box\theta,\verifier}, v'\models \theta$.
        By the definition of the signatures, $\sig{v,\Box\theta,\verifier} \geq \sig{v'',\theta,\verifier}$.
        If $\role = \refuter$ and $\tuple{v,\Box\theta,\refuter}$ is a false position; $\mathsf{I}$ moves to a false position $\tuple{v',\theta,\refuter}$ with $v\preceq;R v'$.
        Furthermore, $M_n^\sig{v,\Box\theta,\refuter}, v'\not\models \theta$ and $\sig{v,\Box\theta,\refuter} \geq \sig{v',\theta}$.

        \item Suppose the game is at the position $\tuple{v,\theta_1\to\theta_2,\role}$.
        If $\role = \verifier$ and $\tuple{v,\theta_1\to\theta_2,\verifier}$ is a true position.
        After $\mathsf{II}$ moves to $\tuple{v',\theta_1?\theta_2,\verifier}$, $\mathsf{I}$ moves to $\tuple{v',\theta_2,\verifier}$ if it is a true position.
        Otherwise, $\mathsf{I}$ moves to $\tuple{v',\theta_1,\refuter}$; in this case, $\tuple{v',\theta_1,\refuter}$ is a false position.
        Either way, $\sig{v,\theta_1\to\theta_2,\verifier} \geq \sig{v',\theta_i,\role'}$.
        If $\role = \refuter$ and $\tuple{v,\theta_1\to\theta_2,\refuter}$ is a false position; $\mathsf{I}$ moves to a position $\tuple{v',\theta_1?\theta_2,\refuter}$ such that $\tuple{v',\theta_1\verifier}$ is a true position  and $\tuple{v',\theta_2,\refuter}$ is a false position.
        Any answer of $\mathsf{II}$ satisfies our requirements.

        \item Suppose the game is at $\tuple{v,\eta_j' Y_j.\chi_j,\role}$ or at $\tuple{v,Y_j,\role}$, then the owner of the position must move to $\tuple{v,\chi_j,\role}$.
        If there there is $j$ such that $X_i = Y_j$, then $\sig{v,\eta_j' Y_j.\chi_j} =_{j-1} \sig{v,Y_j} =_{j+1} \sig{w,\chi_j}$ and $\sig{w,Y_j}(j) > \sig{w,\chi_j}(j)$.
        If there is no $j$ such that $X_i = Y_j$, then $\sig{w,\eta_{i+1} X_{i+1}.\theta_{i+1}} = \sig{w,X_i} = \sig{w,\theta_i}$.
    \end{itemize}

    On finite runs, $\mathsf{I}$ wins by the construction of the strategy $\sigma$: $\mathsf{I}$ is $\mathsf{V}$ at a true position of the form $\tuple{v,P,\role}$ reachable following $\sigma$.
    Similarly, $\mathsf{I}$ is $\mathsf{R}$ at a false positions of the form $\tuple{v,P,\role}$.
    Also, $\mathsf{I}$ is $\mathsf{V}$ at true positions $\tuple{v,\psi,\role}$ where $v\in W^\bot$.

    Now, consider an infinite run $\tuple{w_0,\varphi_0,\role_0}, \tuple{w_1, \varphi_1,\role_1}, \dots$ where $\mathsf{I}$ follows $\sigma$, let $i$ be the smallest number in $\{1,\dots, n\}$ such that $\eta_i X_i$ is an infinitely often regenerated fixed-point operator.
    Suppose there is $j\in\{1,\dots, m\}$ such that $X_i = Y_j$ for a contradiction.
    Let $k_1,k_2,\dots$ be the positions where $Y_j$ occur; that is, all the positions $\tuple{w_{k_l}, \psi,\role} = \tuple{w_{k_l}, Y_j,\role}$.
    Without loss of generality, we suppose that for all $i'<i$ no $X_{i'}$ is regenerated after the $k_1$th position of the run.
    The move from $\tuple{w_{k_i}, Y_j,\role}$ to $\tuple{w_{k_{i}+1}, \chi_j,\role}$ causes a strict decrease in the signature.
    The other moves between $k_i+1$ and $k_{i+1}$ cannot cancel this decrease, since either the signature does not change or one of the first $i$ positions of the signature is reduced.
    Therefore the sequence of signatures
    \[
        \sig{w_{k_1}, Y_j,Q_{k_1}}, \sig{w_{k_2}, Y_j,Q_{k_2}}, \sig{w_{k_3}, Y_j,Q_{k_3}}, \dots
    \]
    is strictly decreasing.
    This is a contradiction, as the signatures are well-ordered.
    Therefore there is no $j$ such that $X_i = Y_j$, and so $\mathsf{I}$ wins the run.

    We conclude that the strategy $\sigma$ is a winning strategy for $\mathsf{I}$.

    We now sketch how to prove the other half of the theorem.
    If $M,w\not\models\varphi$, then we can define a winning strategy for $\mathsf{II}$ similar to the strategy for $\mathsf{I}$ defined above.
    The main difference is that we need to consider $\mathsf{II}$-signatures, denoting approximants for $\mathsf{II}$'s variables.

    Let $\eta_1 X_1.\psi_1, \eta_2 X_2.\psi_2, \dots, \eta_n X_n.\psi_n$, enumerate the fixed-point subformulas of $\varphi$ in non-increasing size as above.
    We now enumerate the fixed-point subformulas of $\varphi$ which are owned by $\mathsf{II}$ in non-increasing size:
    \[
        \eta_1'' Z_1.\chi_1', \eta_2'' Z_2.\chi_2', \dots, \eta_k'' Z_k.\chi_k'.
    \]
    An \emph{$\mathsf{II}$-signature} $r = \tuple{r(1), \dots, r(m)}$ is a sequence of $k$ ordinals.
    As with $\mathsf{I}$-signatures, the lexicographical order is a well-ordering of the $\mathsf{II}$-signatures.

    Given a $\mathsf{II}$-signature $r$, we define augmented models $M^r_0, \dots, M^r_n$ by
    \begin{align*}
        M_0^r     &:= V; \\
        M_{i}^r &:= \left\{
        \begin{array}{ll}
            M_i[X_{i}\mapsto \|\eta_j'' Z_j^{r(j)}.\chi_j'\|^{M_i^r}], & \text{if $ X_{i} = Z_j$}; \\
            M_i[X_{i}\mapsto \|\eta_{i} X_{i}.\psi_{i}\|^{M_i}], & \text{if there is no $j$ such that $X_{i} = Z_j$}
        \end{array} \right. \text{, for $1\leq i \leq n$.}
    \end{align*}

    Again, if $M_n,v\models\psi$, we call $\tuple{v,\psi,\role}$ a true position; if $M_n,v \not\models\psi$, we call $\tuple{v,\psi,\role}$ a false position.
    Now, if $\tuple{v,\psi,\role}$ a true position, then there is a least signature $r$ such that $M_n^r,v\models\psi$.
    Similarly, if $\tuple{v,\psi,\role}$ a false position, then there is a least signature $r$ such that $M_n^r,v\not\models\psi$.
    Denote these signatures by $\mathsf{sig}^\mathsf{II}{v,\psi,\role}$.

    Similar to the first case, $\mathsf{I}$ cannot move in ways where the $\mathsf{II}$-signature increases, and we can build a strategy $\tau$ for $\mathsf{II}$ in a way that, eventually, their moves do not increase the signature.
    By the same argument as above, the strategy $\tau$ is winning.

\end{document}